\newcommand{\E}{\mathbb{E}}
\newcommand{\ed}{\mathrm{d}}
\newcommand{\R}{\mathbb{R}}
\newcommand{\id}{{\bf 1}}
\renewcommand{\P}{\mathbb{P}}
\newcommand{\Q}{\mathbb{Q}}
\newcommand{\EIS}{\mathrm{EIS}}
\newcolumntype{L}[1]{>{\raggedright\let\newline\\\arraybackslash\hspace{0pt}}p{#1}}
\DeclareMathOperator{\Var}{Var}
\DeclareMathOperator{\spow}{sp}
\DeclareMathOperator{\argmax}
\newcommand{\drift}{\mathrm{drift}}
\newcommand{\vol}{\mathrm{vol}}
\newtheorem{theorem}{Theorem}
\newtheorem{corollary}[theorem]{Corollary}
\newtheorem{proposition}[theorem]{Proposition}
\newtheorem{algorithm}[theorem]{Algorithm}
\theoremstyle{definition}
\newtheorem{definition}[theorem]{Definition}
\numberwithin{equation}{section}
\numberwithin{theorem}{section}
\begin{document}

\author{\small{John Armstrong\textsuperscript{a}, Cristin Buescu\textsuperscript{b}, James Dalby\textsuperscript{c}}\\
\small{\textsuperscript{a,b,c}Department of Mathematics, King's College London, London, UK}\\
\small{\textsuperscript{a}john.armstrong@kcl.ac.uk, \textsuperscript{b}cristin.buescu@kcl.ac.uk, \textsuperscript{c}james.dalby@kcl.ac.uk} }

\title{Optimal post-retirement investment under longevity risk in collective funds}

\date{}              


\maketitle

\begin{abstract}
We study the optimal investment problem for a homogeneous collective of $n$ individuals investing in a Black-Scholes model subject to longevity risk with Epstein--Zin preferences. 
We compute analytic formulae for the optimal investment strategy, consumption is in discrete-time and there is no systematic longevity risk. We develop a stylised model of systematic longevity risk in continuous time which allows us to also obtain an analytic solution to the optimal investment problem in this case. We numerically solve the same problem using a continuous-time version of the Cairns--Blake--Dowd model.
We apply our results to estimate the potential benefits of pooling longevity risk
over purchasing an insurance product such as an annuity, and to estimate the 
benefits of optimal longevity risk pooling in a small heterogeneous fund.
\end{abstract}

\textbf{Keywords}: Collective pension funds, idiosyncratic and systematic longevity risk; Epstein--Zin preferences

\section{Introduction}

In this paper we calculate the optimal consumption and investment strategies
for a pension investor in a Black--Scholes market subject to
longevity risk, but who has access to longevity insurance through a tontine structure or a similar scheme of longevity credits. We only consider the drawdown stage of retirement.
Our focus is on situations where we can obtain analytic results. 
For this reason, we will throughout model the preferences of the investor using either a power utility, or more generally, Epstein--Zin preferences.

We begin by considering the discrete-time case where there is no systematic longevity risk. This scenario is the simplest and enables us to build the necessary intuition to understand Epstein-Zin utility and its impact on pension problems, before moving onto more complex scenarios in continuous time.  We assume that all investors are identical. We assume a tontine structure where all investors have agreed that each year the funds remaining from any dying investor will be shared evenly among the survivors as longevity credits. Under these circumstances, we will see that it is possible to solve the optimal investment problem to obtain difference equations determining the optimal strategy. We obtain the solution for both finite and infinite funds.

As an application of these results, we can prove rigorously that our infinite fund represents a limit of the finite fund case. We also use our result to show that an effective strategy for managing a heterogeneous fund where investors have different preferences and mortality distributions, is for each investor to trade and consume as though they were in a homogeneous fund. When combined with a longevity credit mechanism that accounts for the different market risks taken by different individuals, we find numerically that this results in strategies that perform close to optimally. This demonstrates that results obtained on the assumption of homogeneous funds are in fact useful in practice.


We next introduce systematic longevity risk. In order to obtain tractable problems, in this section we work in continuous time. In the first instance, we are interested primarily in analytic tractability, so we introduce a new stochastic model for longevity risk which is designed to be such. 
The cost of this tractability is that the model cannot be calibrated to realistic data, but can be used to obtain qualitative results on behaviour by explicitly solving the Hamilton-Jacobi-Bellman (HJB) equation. 

Finally, we consider a realistically calibrated 1-factor model for systematic longevity risk based on the Cairns--Blake--Dowd (CBD) model \cite{CBD_article}. We must solve the resulting HJB equation using PDE methods, but the use of homogeneous preferences at least allows us to reduce the dimensionality of the problem. This model enables us to study the consumption-investment problem with more realistic mortality risk. These results confirm are earlier qualitative results.

Our results demonstrate that, as expected, large benefits are available through longevity-credit mechanisms and from incorporating investment in risky assets post retirement. These benefits can be obtained
entirely through a mechanism of mutual insurance and do not
require any inter-generational transfers of wealth, as seen in Collective Defined Contribution Schemes.

We also show that when individuals face systematic longevity risk, it is sometimes possible to obtain better pension outcomes. This is because in a model for systematic longevity risk, one gains increasing information about one's mortality with time and one can potentially use this information to make better investment and consumption decisions. This provides an additional means by which longevity pooling in a collective fund can out perform an annuity than those identified in \cite{bear_or_insure} for instance.

\medskip

Our results contribute to the extensive literature
on optimal investment that was initiated by Merton in \cite{merton1969lifetime}. The results in our first section provide an improvement on the similar results of \cite{stamos} by considering finite funds of arbitrary size and incorporating Epstein--Zin utility. Epstein--Zin utility allows one to have preferences for consumption over time where one can separately consider risk-aversion and the diminishing returns from increasing consumption at one moment in  time. It has been shown that this allows one to resolve a number of asset pricing puzzles \citep{bansalYaron, bansal, benzoniEtAl, bhamraEtAl}. It therefore seems likely that any realistic model of an individual's preferences for pension investments will also separate these factors. 
The tractability of optimal investment problems with Epstein--Zin utility is studied extensively in \cite{campbellViceira}.
Fully rigorous treatments of continuous-time Epstein--Zin investment problems without mortality, can be found in \cite{optimal_EZ_finite}
and \cite{xing}.

Our paper also contributes to the growing literature on tontines. An extensive review of the history of tontines, their potential benefits and of the more recent literature can be found in \cite{milevsky}.
Other authors have also identified optimal investment strategies for homogeneous pools of investors subject to various constraints on the design of the pension product. \cite{milevsky2015} considers the optimal investment problem for a tontine invested in a bond. \cite{chenAndRach} considers a tontine structure with a minimum guaranteed payout. \cite{chenRachSehner2020} considers structures combining annutities and tontines. \cite{boadoPenas} consider a product where an individual and shared fund are maintained according to a specific set of rules that ensure the funds are kept within certain "corridors".

\medskip

Our results are intended to be simple, tractable and used as a benchmark for studying risk-pooling
mechanisms in collective pensions. Our ultimate conclusions suggest that tractable homogeneous preferences such as power utility and Epstein--Zin preferences, provide a reasonable model for individuals who have insufficient funds to achieve a fully adequate pension in retirement, but do not yield plausible strategies for individuals with large pension pots. As a result, we believe that inhomogeneous preference models are required for realistic modelling.

\section{Pooling idiosyncratic longevity risk}\label{sec:discrete_model}

For our market model, we take the Black--Scholes model with a risk-free asset with
interest rate $r$ and a single risky asset $S_t$, satisfying the SDE
\begin{equation}
\ed S_t = S_t(\mu \, \ed t + \sigma \, \ed W_t), \qquad S_{t_0}\in\mathbb{R}^+,
\label{eq:blackScholesMerton}
\end{equation}
for a constant drift $\mu$ and volatility $\sigma$, and a $1$-dimensional Brownian motion $W_t$.
We assume that consumption occurs in discrete-time at evenly-spaced time
points ${\cal T}=\{0,\delta t, 2\, \delta t, \ldots, T \}$.

In this section, we will be interested only in idiosyncratic longevity risk, so we will assume individuals have independent mortality, with the unconditional probability 
of dying in the interval $(t, t+\delta t]$ being $p_t$. Hence, the survival
probability $s_t$ between times $t$ and $t+\delta t$, is given by:
\begin{equation}
s_t =
\frac{\sum_{i=1}^\infty p_{t + i \delta t}}{\sum_{i=0}^\infty p_{t + i \delta t}}.
\label{eq:defs}
\end{equation}
When an individual dies, their remaining wealth is shared evenly among the survivors.

An individual's consumption of their pension is given by a stochastic process $(c_t)_{t \in {\cal T}}$. Let $\tau$ be a ${\cal T}$-valued random variable denoting the last time at which the individual was able to consume. Note that $\tau$ is not a stopping time, but $\tau + \delta t$ is.

We will assume that each individual seeks to maximize a value function given
by a homogeneous Epstein--Zin utility which we now define.
\begin{definition}
	\label{def:homogeneousEpsteinZin}	
	{\em Discrete-time homogeneous Epstein--Zin utility with mortality} depends on parameters $\alpha \in (-\infty,1) \setminus\{0\}$, $\rho \in (-\infty,1) \setminus\{0\}$,
	and $0<\beta \leq 1$.
	It is the $\R_{\geq 0} \cup \{ \infty \}$-valued stochastic process
	defined recursively by
	\begin{equation}
	Z_t(c, \tau) =
	\begin{cases}
	0 & t > \tau; \rho>0 \\
	\infty & t > \tau; \rho<0 \\
	\left[ c_t^\rho + \beta \, \E_t( Z_{t+\delta t}(c, \tau)^\alpha )^\frac{\rho}{\alpha} \right]^\frac{1}{\rho} & \text{otherwise}.
	\end{cases}
	\label{eq:defepsteinzin}
	\end{equation}
	To interpret this formula we use the convention $\infty^a=0$ for $a<0$.
\end{definition}
We call these preferences homogeneous because they satisfy the equation $Z_t(\zeta c, \tau)= \zeta Z_t(c, \tau)$ for any constant $\zeta$. Our choice of value for the utility when $t > \tau$, is determined by the requirement that positive homogeneity still holds.

The parameter $\rho$ can be interpreted as controlling satiation, as the term $c^\rho_t$ will result in diminishing returns to consumption at any instant in time. This will lead optimal investors, to spread their consumption over time, and so the choice of $\rho$ can also be viewed as determining intertemporal substitution.
The parameter $\alpha$ is a risk-aversion parameter. The parameter $\beta$ is a discount rate indicating the level of preference for earlier consumption. When $\alpha=\rho$, maximizing these Epstein--Zin
preferences is equivalent to maxmizing the von Neumann--Morgenstern utility:
\[
\E\left( \sum_{t \in {\cal T}, t \leq \tau } c_t^\alpha \right).
\]
We will primarily be interested in the case when $\beta=1$. This is because we have included mortality in our model which already provides a motivation for preferring earlier consumption and additional discounting does not seem appropriate when considering a pension income. 

If we allow continuous-time trading between consumption, then to find
the optimal investment strategy we must solve a sequence of one-period
investment problems for each discrete-time step.
The value function in the resulting HJB equation will depend upon the variables $(t, S_t, w_t)$ where $w_t$ is the total wealth of the individual at time $t$.
Because Epstein--Zin preferences are positively homogeneous, each one-period
investment problem will be equivalent to a classical Merton problem with a power
utility, and so can be solved analytically. This allows us to obtain
difference equations for the optimal investment strategy. We state the result
and give the detailed calculations in Appendix \ref{appendix:longevitypooling}.

\begin{theorem}
\label{thm:ezIdioscynractic}
Let $z_{n,t}$ denote the optimal Epstein--Zin utility, \eqref{eq:defValueFunction}, for a collective of $n$ individuals investing an amount $1$ at time $t$. Suppose the discounting
parameter $\beta$ in the preferences is equal to $1$.
The collective is
allowed to invest in the Black--Scholes--Merton market \eqref{eq:blackScholesMerton} in continuous time, but consumption occurs at discrete times $\{0,\delta t, 2\, \delta t, \ldots \}$.

The optimal Epstein--Zin utility is determined by the equations:
\begin{equation}
S_t(n,i):= \binom{n}{i} (s_t)^i (1-s_t)^{n-i}.
\label{eq:defS}
\end{equation}
\begin{equation}
a^*:= \frac{\mu-r}{(1-\alpha) \sigma^2},
\label{eq:defastar}
\end{equation}
\begin{equation}
\xi := a^*(\mu - r) + r - \frac{1}{2}(a^*)^2(1-\alpha)\sigma^2,
\label{eq:defxi}
\end{equation}
\begin{equation}
\tilde{\theta}_{n,t} :=
\begin{cases}
\beta^\frac{1}{\rho} \exp( \xi \, \delta t) \left(
\sum_{i=1}^n \left( \frac{i}{n} \right)^{1-\alpha} S_t(n,i) z_{i,t+\delta t}^{\alpha}
\right)^\frac{1}{\alpha} & n \neq \infty, \\
\beta^\frac{1}{\rho} \exp(\xi \delta t) s_t^{(\frac{1}{\alpha}-1)} z_{\infty, t+\delta t}
& n= \infty
\end{cases}
\label{eq:deftildetheta}
\end{equation}

\begin{equation}
z_{n,t}^\frac{\rho}{1-\rho} = 1 + \tilde{\theta}_{n,t}^\frac{\rho}{1-\rho}.\label{eq:valueFunction4Tilde}
\end{equation}
To compute $z_{n,t}$, note that the last two equations determine $z_{n,t}$ from ${z_{i,t+\delta t}}$, and we know the terminal utility $z_{n,T}$.

The term $a^*$ defined in \eqref{eq:defastar} is the optimal proportion invested in the risky asset. It is independent of both $\rho$ and the distribution of mortality.

The optimal proportion of wealth consumed by each survivor at
a given time point is
\begin{equation}
c^*_t = z_t^\frac{\rho}{\rho-1}
\label{eq:cStarConclusion}.
\end{equation}
\end{theorem}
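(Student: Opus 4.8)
The plan is to prove the theorem by backward induction over the consumption times, using the positive homogeneity of the Epstein--Zin aggregator to eliminate the wealth variable. First I would introduce the optimal value function $V_{n,t}(w)$ for a representative survivor holding wealth $w$ in a collective of current size $n$, and note that homogeneity forces $V_{n,t}(w) = w\, z_{n,t}$, so it suffices to track the per-unit-wealth quantities $z_{n,t}$. The terminal values $z_{n,T}$ are known, and the inductive step is to express $z_{n,t}$ through the $z_{i,t+\delta t}$. The dynamics within a period $[t,t+\delta t]$ are: consume $c$ at the start, invest the remainder continuously in \eqref{eq:blackScholesMerton}, and at $t+\delta t$ resolve mortality and redistribute the longevity credits.

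Solving the continuous-time investment sub-problem is the first technical component. Because each continuation value $V_{i,t+\delta t}(x)=x\,z_{i,t+\delta t}$ is linear in wealth, raising it to the power $\alpha$ inside the recursion makes the within-period objective a positive constant (the mortality-weighted sum) times $\E[w_{t+\delta t}^\alpha]$. The investment choice therefore affects only this expectation, which decouples it entirely from the mortality data and from $\rho$; maximizing it is a classical single-period Merton problem with power utility, whose HJB equation (equivalently, the explicit geometric-Brownian-motion law) yields the constant optimal proportion $a^*$ of \eqref{eq:defastar} and the certainty-equivalent growth $\bigl(\E[w_{t+\delta t}^\alpha]\bigr)^{1/\alpha} = w_t\exp(\xi\,\delta t)$ with $\xi$ as in \eqref{eq:defxi}. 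This simultaneously establishes the claimed independence of $a^*$ from $\rho$ and from the distribution of mortality.

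The part I expect to be most delicate is assembling the longevity-credit step correctly. Conditioning on the representative individual surviving (probability $s_t$), the total number $i$ of survivors is $1$ plus a $\mathrm{Binomial}(n-1,s_t)$ variable, and the pooled wealth $n(w-c)R$ is split so each survivor holds $n(w-c)R/i$. Combining the survival probability with the identity $\binom{n-1}{i-1}=\tfrac{i}{n}\binom{n}{i}$ and the wealth-scaling factor $(n/i)^\alpha$ produces exactly the weight $\bigl(\tfrac{i}{n}\bigr)^{1-\alpha}$; together with the independence of market and mortality and the growth factor above, this gives $\beta\,\bigl(\E_t[Z_{t+\delta t}^\alpha]\bigr)^{\rho/\alpha} = (w-c)^\rho\,\tilde\theta_{n,t}^\rho$ with $\tilde\theta_{n,t}$ as in \eqref{eq:deftildetheta}. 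The infinite-fund case follows the same route with the binomial replaced by its law-of-large-numbers limit, so that a deterministic fraction $s_t$ survives and each holds $(w-c)R/s_t$, producing the $s_t^{(1/\alpha-1)}$ factor.

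Finally, with $z_{n,t} = \sup_{c\in[0,1]}\bigl[c^\rho + (1-c)^\rho\,\tilde\theta_{n,t}^\rho\bigr]^{1/\rho}$, I would perform the one-variable optimization: the first-order condition gives $c/(1-c) = \tilde\theta_{n,t}^{\rho/(\rho-1)}$, and substituting back simplifies the bracket to $(c^*)^{\rho-1}$, whence $z_{n,t} = (c^*_t)^{(\rho-1)/\rho}$. Rearranging yields the consumption rule \eqref{eq:cStarConclusion}, and raising to the power $\rho/(1-\rho)$ yields the recursion \eqref{eq:valueFunction4Tilde}, closing the induction.
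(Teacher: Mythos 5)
Your proposal follows essentially the same route as the paper's proof in Appendix \ref{appendix:longevitypooling}: positive homogeneity reduces the value function to $z_{n,t}$, the within-period problem decouples into a one-period Merton problem giving $a^*$ and the growth factor $\exp(\xi\,\delta t)$, the survival-conditioned binomial weights combine with the $(n/i)^\alpha$ wealth rescaling to produce the $(i/n)^{1-\alpha}S_t(n,i)$ weights, and the one-variable first-order condition in the consumption rate yields \eqref{eq:cStarConclusion} and the recursion \eqref{eq:valueFunction4Tilde}. The only detail worth flagging is that for $\alpha<0$ the inner Merton step requires an infimum of $\E_t(X_{t+\delta t}^\alpha)$ rather than a supremum, though, as the paper notes, the algebra and the resulting formulae are identical.
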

The term $S_t(n,i)$, defined in \eqref{eq:defS}, is the transition 
probability
from $n$ surviving individuals to $i$ surviving individuals over the interval $(t,t+\delta t]$.

\medskip

Our analytic formulae make it relatively simple
to quantify the convergence of the value function
as the number of individuals tends to infinity.

\begin{theorem}
	\label{thm:cvgcEpsteinZin}	
	Let $z_{\infty,t}$ denote the maximum Epstein--Zin utility at time $t$
	for the infinitely collectivized case, then
	\[
	z_{n,t} = z_{\infty,t} + O(n^{-\frac{1}{2}}).
	\]
\end{theorem}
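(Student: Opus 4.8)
The plan is to argue by backward induction on the finitely many consumption dates $t \in \mathcal{T}$, producing constants $C_t$ with $|z_{n,t} - z_{\infty,t}| \le C_t\, n^{-1/2}$ uniformly in $n$. At the terminal date $t=T$ the utility $z_{n,T}$ does not depend on $n$, so the claim holds with $C_T = 0$. For the inductive step I would read off from \eqref{eq:deftildetheta} and \eqref{eq:valueFunction4Tilde} that $z_{n,t}$ is determined by the time-$(t+\delta t)$ data through the single scalar
\begin{equation*}
A_n := \sum_{i=1}^n \left(\tfrac{i}{n}\right)^{1-\alpha} S_t(n,i)\, z_{i,t+\delta t}^{\alpha} = \E\left[\left(\tfrac{N}{n}\right)^{1-\alpha} z_{N,t+\delta t}^{\alpha}\right],
\end{equation*}
where $N \sim \mathrm{Bin}(n,s_t)$ and the $i=0$ term vanishes because $1-\alpha>0$. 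The infinite-fund recursion corresponds exactly to $A_\infty := s_t^{1-\alpha} z_{\infty,t+\delta t}^{\alpha}$ (replacing $N/n$ by its mean $s_t$ and $z_N$ by $z_\infty$), since $s_t^{1/\alpha-1}z_{\infty,t+\delta t} = A_\infty^{1/\alpha}$. As $\tilde\theta_{n,t} = \beta^{1/\rho}\exp(\xi\,\delta t)\,A_n^{1/\alpha}$ and $z_{n,t} = (1+\tilde\theta_{n,t}^{q})^{1/q}$ with $q:=\rho/(1-\rho)$, and both $x\mapsto x^{1/\alpha}$ and $\theta\mapsto(1+\theta^q)^{1/q}$ are smooth away from $0$, the whole problem reduces to proving $|A_n - A_\infty| = O(n^{-1/2})$ and pushing the estimate through these maps by a mean-value bound.

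For the core estimate I would split
\begin{equation*}
A_n - A_\infty = \underbrace{z_{\infty,t+\delta t}^{\alpha}\Big(\E\big[(N/n)^{1-\alpha}\big] - s_t^{1-\alpha}\Big)}_{\text{(I)}} + \underbrace{\E\Big[(N/n)^{1-\alpha}\big(z_{N,t+\delta t}^{\alpha} - z_{\infty,t+\delta t}^{\alpha}\big)\Big]}_{\text{(II)}}.
\end{equation*}
Term (I) is a pure binomial-fluctuation term: because $\E[N/n]=s_t$, the first-order Taylor term of $x\mapsto x^{1-\alpha}$ about $s_t$ vanishes, leaving a contribution of order $\Var(N/n)=s_t(1-s_t)/n = O(n^{-1})$, comfortably within the claimed rate. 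Term (II) carries the induction error: applying the mean value theorem to $z\mapsto z^{\alpha}$ together with the inductive hypothesis $|z_{i,t+\delta t}-z_{\infty,t+\delta t}| \le C_{t+\delta t}\, i^{-1/2}$ bounds it by a constant times $\E[(N/n)^{1-\alpha}N^{-1/2}]$; since $N$ concentrates at $n s_t$, this is $O((n s_t)^{-1/2}) = O(n^{-1/2})$. This is the term that fixes the overall rate and the recursion for the $C_t$; as there are only $T/\delta t$ steps, the constants remain finite.

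The main obstacle is not the mean-value bookkeeping but securing the uniformity that lets it run. First, I need a priori bounds showing that $z_{n,t}$, and hence $A_n$ and $\tilde\theta_{n,t}$, stay bounded above and bounded away from $0$ uniformly in $n$ and $t$; this is itself a backward induction from \eqref{eq:valueFunction4Tilde} (the identity $z^{q}=1+\tilde\theta^{q}\ge 1$ controls one side and positivity of the terminal utility the other), and it is what makes $x\mapsto x^{1/\alpha}$, $z\mapsto z^{\alpha}$ and $\theta\mapsto(1+\theta^q)^{1/q}$ uniformly Lipschitz on the relevant range. Second, when $0<\alpha<1$ the maps $x\mapsto x^{1-\alpha}$ and $z\mapsto z^{\alpha}$ have derivatives that blow up near $0$, and the inductive error $i^{-1/2}$ is itself $O(1)$ for small $i$, so the naive bounds fail on the event that $N$ is atypically small. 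I would neutralise this with a Chernoff/Hoeffding estimate: $\P(N < \tfrac12 n s_t)$ decays exponentially in $n$, so the small-$N$ part of both (I) and (II) is negligible once multiplied by the bounded integrands, while on the bulk event $N\ge\tfrac12 n s_t$ every quantity is bounded with bounded derivatives and the Taylor and mean-value steps apply cleanly.

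Finally I would note that the argument in fact delivers the sharper rate $O(n^{-1})$: term (II) improves to $O(n^{-1})$ once the inductive hypothesis is strengthened to $|z_{i,t+\delta t}-z_{\infty,t+\delta t}|\le C_{t+\delta t}\,i^{-1}$, and since term (I) is already $O(n^{-1})$ the stronger induction closes. The weaker bound $O(n^{-1/2})$ stated in the theorem is all that is needed in what follows, so I would present the $n^{-1/2}$ version and record the improvement as a remark.
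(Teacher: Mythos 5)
Your proposal is correct, and it reaches the conclusion by a genuinely different route from the paper's in the core estimate. Both arguments run the same backward induction and both isolate the binomial sum $\lambda_{n,t}=\sum_i (i/n)^{1-\alpha}S_t(n,i)z_{i,t+\delta t}^\alpha$ as the only nontrivial object, but the paper handles it by splitting off left and right tails via explicit ratios of consecutive binomial weights (to get geometric decay), rewriting the bulk as an integral against the normalized partial-sum law $Q_n$, invoking a Berry--Esseen-type bound of Bhattacharya (with oscillation control of the integrand) to replace $Q_n$ by the Gaussian at cost $O(n^{-1/2})$, and finally applying Laplace's method to evaluate $\int(\cdot)^{1-\alpha}\,\ed\Phi = s_t^{1-\alpha}+O(n^{-1/2})$. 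You instead treat the sum directly as $\E[(N/n)^{1-\alpha}z_N^\alpha]$ with $N\sim\mathrm{Bin}(n,s_t)$, split the error into a pure fluctuation term (second-order Taylor expansion about the mean, $O(n^{-1})$ because the first-order term vanishes) and an induction-error term (mean value theorem plus the $O(i^{-1/2})$ hypothesis), and use a Chernoff bound in place of the paper's explicit tail decay to discard the event where $N$ is atypically small and derivatives of $x\mapsto x^{1-\alpha}$ or $z\mapsto z^\alpha$ blow up. Your route is more elementary --- no quantitative CLT, no Laplace's method --- and it makes transparent that the $n^{-1/2}$ rate is driven entirely by the propagation of the induction error rather than by binomial fluctuations; your closing observation that the induction in fact closes at rate $O(n^{-1})$ is correct and shows the paper's $n^{-1/2}$ is an artifact of the Berry--Esseen step. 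The points you flag as needing care (uniform-in-$n$ upper and lower bounds on $z_{n,t}$, hence on $A_n$ and $\tilde\theta_{n,t}$, so that the maps $x\mapsto x^{1/\alpha}$ and $\theta\mapsto(1+\theta^q)^{1/q}$ are uniformly Lipschitz on the relevant compact range) are exactly the right ones, and the proposed fixes are sound; a complete write-up should spell out the lower bound on $A_n$ (the bulk of the binomial mass near $ns_t$ contributes at least a constant), since that is what keeps $A_n^{1/\alpha}$ controlled when $1/\alpha<0$.
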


See Appendix \ref{appendix:longevitypoolingconvergence} for the proof.

Because we are, in effect, solving a sequence of Merton problems it
is possible to compute the probability distribution for the consumption analytically. We state the result in the most interesting cases when $n=1$ and $n=\infty$. To distinguish these cases we define
a constant $C$ by
\begin{equation}
\label{eq:defC}
C =
\begin{cases}
0 & n=1 \\
1 & n=\infty.
\end{cases}
\end{equation}

\begin{theorem}
	Under the same conditions as \ref{thm:ezIdioscynractic},
    and with $n=1$ or $n=\infty$,
	the optimal fund value per survivor at time $t$, $X_t$, follows a log
	normal distribution. Write $\mu^X_t$ and $\sigma^X_t$ for
	the mean and standard deviation of $\log X_t$ so that
	\begin{equation}
	\log X_t \sim N( \mu^X_t, (\sigma^X_t)^2 ).
	\label{eq:distLogWealth}
	\end{equation}
	The standard deviation is given by
	\begin{equation}
	\sigma^X_t = \sigma a^* \sqrt{t},
	\label{eq:sdLogWealth}
	\end{equation}
	where $a^*$ is given by \eqref{eq:defastar}.
	The mean satisfies the difference equation
	\begin{equation}
	\mu^X_{t+\delta t}=\mu^X_{t} + \log(s_t^{-C}) + 
	\log\left( 1 - z_t^\frac{\rho}{\rho-1} \right) + \tilde{\xi} \delta t, \qquad \mu^X_{0} = \log (x_0),
	\label{eq:meanLogWealth}
	\end{equation}
	where $z_t$ is given by \eqref{eq:valueFunction4} and we define $\tilde{\xi}$ by the same formula
	used to define $\xi$, but with $\alpha$ set to zero, i.e.\
	\begin{equation}
	\tilde{\xi} := a^*(\mu - r) + r - \frac{1}{2}(a^*)^2 \sigma^2.
	\label{eq:defTildeXi}
	\end{equation}	
	The optimal consumption per survivor, $c_t$, at time $t$, is also
	log normally distributed with
	\begin{equation}
	\log c_t \sim N( \tfrac{\rho}{\rho-1} \log(z_t) + \mu^X_t,
	(\sigma^X_t)^2 ).
	\label{eq:distConsumption}
	\end{equation}
	The mean of the log consumption per survivor satisfies the equation
	\begin{equation}
	\E( \log c_{t+\delta t} \mid c_t ) = 
	\log(c_t) + \log(s_t^{-C}) + \frac{\rho}{1-\rho} \log(\phi_t) + \tilde{\xi} \delta t
	\label{eq:meanLogConsumptionDynamics}
	\end{equation}
	where $\phi_t$ is given by equation \eqref{eq:defphi}.
	\label{thm:epsteinZinConsumption}
\end{theorem}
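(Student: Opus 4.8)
The plan is to reduce everything to a single multiplicative one-period recursion for the per-survivor fund value and then read off the log-normal statistics by induction. First I would use Theorem~\ref{thm:ezIdioscynractic} to describe the wealth dynamics inside a single period $(t,t+\delta t]$: by positive homogeneity the optimal strategy holds the constant proportion $a^*$ of \eqref{eq:defastar} in the risky asset throughout the period, while a deterministic fraction $c_t^*=z_t^{\rho/(\rho-1)}$ of wealth (cf.\ \eqref{eq:cStarConclusion}) is consumed at the start of the period. Since the invested wealth is then a geometric Brownian motion with constant portfolio weight $a^*$, the gross return $R$ over the period satisfies $\log R \sim N(\tilde\xi\,\delta t,\,(a^*)^2\sigma^2\,\delta t)$, with $\tilde\xi$ as in \eqref{eq:defTildeXi}. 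Note that the exponential drift is exactly the $\alpha=0$ specialisation of $\xi$, precisely because here we compute expectations of the physical wealth rather than of a risk-adjusted utility.

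Next I would incorporate the longevity credits to obtain the per-survivor recursion
\[
X_{t+\delta t} = X_t\,(1-c_t^*)\,s_t^{-C}\,R .
\]
For $n=1$ (so $C=0$) there is no redistribution and, conditioning on survival, $X$ simply follows the post-consumption market dynamics. For $n=\infty$ (so $C=1$) the key point is that in the infinitely-collectivised limit the strong law of large numbers forces the surviving fraction to equal $s_t$ \emph{deterministically}, so that total wealth is conserved and the per-survivor value is scaled by exactly $s_t^{-1}$ with no extra randomness. This is the step I expect to require the most care, since it is where the infinite-fund idealisation enters and is what guarantees that no mortality-driven variance appears in $\sigma_t^X$.

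Taking logarithms of the recursion and arguing by induction from the deterministic initial value $X_0=x_0$ then gives the results for $X_t$. Because $z_t$, and hence $c_t^*$, are deterministic functions of time, the only randomness in each log-increment is $\log R$; taking expectations yields the mean difference equation \eqref{eq:meanLogWealth}, and since the returns $\log R$ over disjoint periods are independent their variances add, giving $(\sigma_t^X)^2=(a^*)^2\sigma^2\,t$, i.e.\ \eqref{eq:sdLogWealth}. The normality of $\log X_t$ follows because it is a deterministic shift of a sum of independent Gaussians.

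Finally I would treat consumption. Since $c_t=c_t^* X_t=z_t^{\rho/(\rho-1)}X_t$ is a deterministic multiple of $X_t$, its logarithm is Gaussian with the same variance and mean shifted by $\tfrac{\rho}{\rho-1}\log z_t$, which is \eqref{eq:distConsumption}. For the dynamics \eqref{eq:meanLogConsumptionDynamics}, I would substitute $\log c_t=\tfrac{\rho}{\rho-1}\log z_t+\log X_t$ into the log-recursion and then use the value-function identities of Theorem~\ref{thm:ezIdioscynractic}: equation \eqref{eq:valueFunction4Tilde} together with $c_t^*=z_t^{\rho/(\rho-1)}$ gives $1-c_t^*=\tilde\theta_{n,t}^{\rho/(1-\rho)}\,z_t^{-\rho/(1-\rho)}$, and combining this with the definition of $\phi_t$ in \eqref{eq:defphi} lets me collect the terms $\tfrac{\rho}{\rho-1}\log z_{t+\delta t}+\log(1-c_t^*)$ into the single factor $\tfrac{\rho}{1-\rho}\log\phi_t$. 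This last collection is purely algebraic bookkeeping once the recursion and the definition of $\phi_t$ are in hand.
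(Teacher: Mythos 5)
Your proposal is correct and follows essentially the same route as the paper's proof: an induction on the one-period multiplicative recursion $X_{t+\delta t}=s_t^{-C}(1-c_t^*)X_t R$ with $\log R\sim N(\tilde\xi\,\delta t,(a^*)^2\sigma^2\delta t)$ from the constant-proportion Merton strategy, followed by the same algebraic collection of $\tfrac{\rho}{\rho-1}\log z_{t+\delta t}+\log(1-c_t^*)$ into $\tfrac{\rho}{1-\rho}\log\phi_t$ via the value-function recursion \eqref{eq:valueFunction4}. The only cosmetic difference is that the paper takes the $s_t^{-C}$ rescaling directly from the budget equation \eqref{eq:budgetEquationGeneral} as part of the model definition rather than invoking a law-of-large-numbers argument for the $n=\infty$ case.
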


The proof is given in Appendix \ref{appendix:longevitypooling}.




To understand the role of Epstein-Zin preferences in Theorem \ref{thm:epsteinZinConsumption}, we specialize to the case of a market where $r=\mu=0$ and 
to preferences with $\beta=1$. This represents the problem of consuming a
fixed lump sum over time when there is no inflation but also no investment opportunities. While not financially reasonable, this problem highlights
how longevity risk affects consumption, when considered in isolation from market risk.
In this case, $c_t$ is a deterministic function. We find from equations \eqref{eq:meanLogConsumptionDynamics}
and \eqref{eq:defphi} that
\begin{equation}
c_{t+\delta t}=
\left(s_t^{\frac{1}{\alpha }-\frac{C}{\rho }}\right)^{\frac{\rho }{1-\rho }} c_t.
\label{eq:consumptionDynamics}
\end{equation}
We note that $s_t$ is a non-zero probability, so $0 < s_t \leq 1$. We may use equation \eqref{eq:consumptionDynamics} to compute whether consumption increases or decreasing over time. We summarize the results in Table \ref{table:direction}.

\begin{table}
	\begin{center}
		\begin{tabular}{ccc} \toprule
			& Collectivised & Individual \\ \midrule
			$\alpha<\rho<0<1$ & Increasing & Decreasing \\
			$\alpha<0<\rho<1$ & Increasing & Increasing \\ 
			$0<\alpha<\rho<1$ & Decreasing & Decreasing \\ 
			$\alpha=\rho<0<1$ & Constant & Decreasing \\
			$0<\alpha=\rho<1$ & Constant & Decreasing \\ \bottomrule
		\end{tabular}
		\caption{The behaviour of consumption with time when $\mu=r=0$, $\beta=1$ and $\alpha\leq\rho$.}
	\end{center}
	\label{table:direction}
\end{table}

Perhaps surprisingly, we find that sometimes consumption
increases over time rather than decreases. When $\alpha>0$, living longer always increases the total lifetime utility. In this case
the risk an individual faces is the risk of dying young with leftover funds, which is addressed by consuming early. When $\alpha<0$, living longer decreases the total lifetime utility, and the lower the income the more sharply it decreases. In this case the risk an individual faces is the risk of living longer and running out of funds, so the individual chooses to save (and postpone consumption) to compensate themselves for this risk.

Some may believe that increased longevity should always increase
utility, in which case only Epstein--Zin utilities with $\alpha>0$ should be considered. Others may believe that it is possible that if one were too poor in old age, living longer might indeed be detrimental. Epstein--Zin utilities with $\alpha>0$ could then be used to model individuals with an adequate pension and $\alpha<0$, to model poor individuals who are unable to achieve an adequate income in retirement. We believe that a
 more realistic model should include an adequacy level such that consumption above the adequacy level results in improved utility with longer lifetimes and consumption below the adequacy results in decreasing utility. 
However, doing so will inevitably break the positive homogeneity of the preferences, leading to an analytically intractable result. Thus, while Epstein--Zin preferences provide an instructive model due to their tractability, non-homogenous
preferences will ultimately be required in a realistic preference model. We will consider preference models which allow for more refined modelling of adequacy in future research.

In the individual case ($n=1$), the concern that one will die young is much more serious. This is why for the individual problem, the fear of an inadequate pension only dominates when both $\alpha<0$ and $0<\rho<1$.

The case $\alpha=\rho$ corresponds to standard
von-Neumann Morgernstern preferences, in which case constant consumption
is optimal.
More significantly, our result also shows the converse. Constant consumption from one period to the next is only optimal if and only if either (i) the survival probability is one, or (ii) $\alpha=\rho$ so one is satisfaction risk-neutral. Hence, even ignoring market effects, constant consumption will be sub-optimal for any realistic parameter choices.

We have not shown the case $\alpha>\rho$ in Table \ref{table:direction} as we found the resulting behaviour to be difficult to interpret as rational, cautious (as understood intuitively) strategies.
In our view, this is because when $\alpha>\rho$, Epstein--Zin preferences do not correctly operationalise the intuitive notion of risk-aversion.
To see why, it is helpful to rewrite the Epstein--Zin utility function.
We
define the signed power function by
\[
\spow_\gamma(x) = \begin{cases}
x^\gamma & \text{ when } \gamma>0 \\
-x^\gamma & \text{ when } \gamma<0
\end{cases}
\]
and define the {\em Epstein--Zin satisfaction}, $z_t$, by
\[
z_t = \spow_{\rho}( Z_t, \rho ).
\]
We may then write the defining equations of homogeneous Epstein--Zin preferences as follows
\begin{equation}
z_t = c_t^\rho + \beta \spow_{\frac{\alpha}{\rho}}^{-1} \left( \E_t\left( \spow_{\frac{\alpha}{\rho}}( z_{t+\delta t} ) \right) \right).
\label{eq:epsteinZinRewritten}
\end{equation}
For deterministic cashflows, these preferences simplify to
\begin{equation*}
z_t = c_t^\rho + \beta z_{t+\delta t}
= \sum_{i=0}^\infty \beta^i c_{i \delta t}^\rho. 
\end{equation*}
Since $\rho$ affects ones preferences even over deterministic cashflows, we interpret this equation as showing that $\rho$ is a parameter measuring satiation, that is the diminishing returns from increased consumption over one time period, rather than risk-aversion.

Equation \eqref{eq:epsteinZinRewritten} shows that if $\alpha>\rho$, the preferences are risk-seeking in the satisfaction, risk-neutral if $\alpha=\rho$ and risk-averse if $\alpha<\rho$. Constant consumption is optimal in a market with $\mu=r=0$ only if $\alpha=\rho$, that is only if one is satisfaction-risk-neutral. For Epstein--Zin preferences, a risk-averse investor will prefer early consumption to a steady income in retirement. This may be one of the reasons many retirees see annuities as unattractive. 

\medskip

It is also interesting to calculate how consumption changes 
according to the available investment opportunities.
If interest rates increase, one may choose to defer consumption to
a later date to benefit from the increased rate.
To quantify this behaviour, one wishes to calculate
the elasticity of intertemporal substitution (EIS) which
is defined as follows.

\begin{definition}
	The {\em elasticity of intertemporal substitution} at time $t$ is defined to be
	\[
	\EIS_t:= \frac{1}{\delta t}
	\frac{\ed}{\ed r} \left( \E( \log( c_{t+1}) )
	- \log( c_{t}) \right).
	\]
	When $c_t$ is deterministic, this definition corresponds
	with the standard definition of \cite{hall}.
\end{definition}

Theorem \ref{thm:epsteinZinConsumption} allows us
to calculate this elasticity.
\begin{corollary}
	\label{corollary:eis}	
	For the optimal investment strategy of Theorem \eqref{thm:ezIdioscynractic}
	we have
	\[
	\EIS_t = \frac{1}{1 - \rho} \left( 1 - \frac{(\mu - r) (1 + \alpha (\rho - 2))}{(\alpha - 1)^2 \sigma^2}
	\right).
	\]
	If $\mu=r$ this simplifies to
	\[
	\EIS_t = \frac{1}{1 - \rho}.
	\]
	In the case of von Neumann-Morgernstern utility we have
	\[
	\EIS_t = \frac{1}{1 - \rho} \left( 1 - \frac{\mu - r}{\sigma^2} \right).
	\]
\end{corollary}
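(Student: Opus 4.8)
The plan is to substitute the consumption-growth law \eqref{eq:meanLogConsumptionDynamics} into the definition of the $\EIS$ and differentiate. The key structural observation is that \eqref{eq:meanLogConsumptionDynamics} expresses $\E(\log c_{t+\delta t}\mid c_t) - \log c_t$ as a deterministic quantity in which the reference value $c_t$ has already cancelled, so the $r$-derivative is unambiguous:
\[
\EIS_t = \frac{1}{\delta t}\frac{\ed}{\ed r}\Bigl(\log(s_t^{-C}) + \tfrac{\rho}{1-\rho}\log\phi_t + \tilde\xi\,\delta t\Bigr).
\]
The survival probability $s_t$ does not depend on $r$, so the first term vanishes and only the contributions through $\phi_t$ and through $\tilde\xi$ survive.

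First I would extract the $r$-dependence of $\phi_t$. From its definition \eqref{eq:defphi} (with $\beta = 1$) the factor reduces to $\phi_t = \exp(\xi\,\delta t)\,s_t^{1/\alpha - C}$; crucially, the recursively-defined utilities $z_{t+\delta t}$ --- which themselves depend on $r$ through every future value of $\xi$ --- cancel, leaving $\log\phi_t = \xi\,\delta t + (\tfrac1\alpha - C)\log s_t$. Hence $\tfrac{\ed}{\ed r}\log\phi_t = \delta t\,\tfrac{\ed\xi}{\ed r}$ and the problem collapses to
\[
\EIS_t = \frac{\rho}{1-\rho}\frac{\ed\xi}{\ed r} + \frac{\ed\tilde\xi}{\ed r}.
\]

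I would then compute the two remaining derivatives, taking care over the asymmetry between them. The quantity $\xi$ of \eqref{eq:defxi} is the value of the one-period Merton problem \emph{at its own maximiser} $a^*$ of \eqref{eq:defastar}, so the envelope theorem applies and gives $\tfrac{\ed\xi}{\ed r} = 1 - a^*$. By contrast $\tilde\xi$ of \eqref{eq:defTildeXi} inserts the \emph{same} $a^*$ into a quadratic whose coefficient is $\sigma^2$ rather than $(1-\alpha)\sigma^2$, so $a^*$ is not stationary for $\tilde\xi$ and the envelope theorem must be refused; here I would differentiate directly, using $\tfrac{\ed a^*}{\ed r} = -\tfrac{1}{(1-\alpha)\sigma^2}$, to obtain $\tfrac{\ed\tilde\xi}{\ed r} = 1 - \tfrac{(\mu-r)(1-2\alpha)}{(1-\alpha)^2\sigma^2}$.

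Substituting gives $\EIS_t = \tfrac{\rho}{1-\rho}(1-a^*) + 1 - \tfrac{(\mu-r)(1-2\alpha)}{(1-\alpha)^2\sigma^2}$, after which only algebra remains: the constants combine via $\tfrac{\rho}{1-\rho}+1 = \tfrac{1}{1-\rho}$, and placing the two $(\mu-r)$-terms over the common denominator $(1-\rho)(1-\alpha)^2\sigma^2$ the numerator $\rho(1-\alpha)+(1-2\alpha)(1-\rho)$ collapses to $1+\alpha(\rho-2)$, yielding the stated formula (using $(\alpha-1)^2 = (1-\alpha)^2$). The special cases then drop out: $\mu = r$ annihilates the fraction, while $\alpha = \rho$ uses $1+\rho(\rho-2) = (1-\rho)^2$ to cancel against $(1-\alpha)^2 = (1-\rho)^2$. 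I expect the main obstacle to be conceptual rather than computational --- recognising both that the future $r$-dependence carried by the $z_{t+\delta t}$ cancels out of $\phi_t$ (so the elasticity is genuinely myopic) and that the envelope theorem is legitimate for $\xi$ but not for $\tilde\xi$.
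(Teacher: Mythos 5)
Your proposal is correct and takes essentially the same route as the paper: differentiate the consumption-growth identity \eqref{eq:meanLogConsumptionDynamics} with respect to $r$, note that the only $r$-dependence sits in $\xi$ (inside $\phi_t$) and $\tilde\xi$, and reduce to $\EIS_t = \frac{\rho}{1-\rho}\frac{\ed\xi}{\ed r} + \frac{\ed\tilde\xi}{\ed r}$ before doing the algebra, which you carry out correctly (including the envelope-theorem shortcut for $\xi$ and the direct differentiation of $\tilde\xi$). One remark: the paper's own displayed intermediate step reads $\frac{\ed}{\ed r}\bigl(\frac{\rho}{1-\rho}\xi - \tilde\xi\bigr)$ with a minus sign, but your plus sign is the correct one --- only $+\frac{\ed\tilde\xi}{\ed r}$ reproduces the stated formula, so the paper's sign there is a typo.
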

\begin{proof}
	From  \eqref{eq:meanLogConsumptionDynamics} and \eqref{eq:defphi} we immediately
	find
	\[
	\EIS_t = \frac{\ed }{\ed r} \left(\frac{\rho}{1-\rho} \xi - \tilde{\xi} \right).
	\]
	The result is now a simple calculation from \eqref{eq:defastar} and \eqref{eq:defTildeXi}.
\end{proof}

This result explains why we call $\rho$ a satiation parameter rather than following the convention in the economics literature of saying that it is a parameter for EIS. The concept of satiation is a feature inherent to the preferences alone, whereas EIS depends upon other features of the model.

\medskip

We now simulate some example pension outcomes using \Cref{thm:ezIdioscynractic}. In Figure \ref{fig:epstein-zin-with-riskaversion}, we plot the optimal consumption calculated
using homogeneous Epstein--Zin preferences with
mortality distribution given by  the \cite{cmi2018} model
CMI\_2018\_F [1.5\%],
with market parameters
$r=0.027$, $\mu=0.062$, $\sigma=0.15$ and an initial fund of
value $\pounds 126,636$. The parameters are chosen to represent a median-earning female retiring in the UK in 2019. \label{page:marketParams}

Regarding values of $\alpha$ and $\rho$, empirically relevant values are those such that $\alpha<0,\rho>0$ \cite{xing}, at least for incomplete markets. Although, values of $\alpha\in(-7, -1)$ and $\rho\in(-4,1/2)$ are considered in \cite{xing}. The case $\alpha,\rho<0$ will yield relevant results. \cite{havranek2015} finds a mean value of 0.5 in a meta-analysis of empirical studies on the elasticity of inter-temporal substitution. Using the relation we found in Corollary \ref{corollary:eis}, $\rho=-1$ seems a reasonable choice. If anything, this should underestimate the benefits of our scheme when compared to the case $\rho>0$.  While the case $\alpha,\rho>0$ is not typically studied in the literature, our analysis shows it is a qualitatively distinct and interesting case corresponding to individuals with adequate pensions. 

\Cref{fig:epstein-zin-with-riskaversion} shows consumption strategies in the presence of market risk. For $\alpha>0$ (\Cref{fig:epstein-zin-with-riskaversion} (a)), living longer still increases total utility, but consumption increases then decreases, since it is initially postponed to take advantage of investment opportunities. For $\alpha<0$ (\Cref{fig:epstein-zin-with-riskaversion} (b) and (c)), living longer still decreases utility, so consumption continues to be postponed to avoid running out of funds. Additionally, \Cref{fig:epstein-zin-with-riskaversion} visually illustrates how our homogeneous preferences effectively set an adequacy level of zero when $\alpha>0$ and infinity when $\alpha<0$. Because the adequacy level is zero when $\alpha>0$, the investor is willing to accept some years where the pension is unrealistically low (even zero) and this is what leads to the unrealistic consumption strategies seen in Figure \ref{fig:epstein-zin-with-riskaversion} (a). In \Cref{fig:epstein-zin-with-riskaversion} (c), the investor aggressively targets achieving an adequacy of infinity at the expense of almost no consumption until they are close to death. This is also an unrealistic strategy. \Cref{fig:epstein-zin-with-riskaversion} (b) represents a more plausible strategy, with the median scenarios outperforming an annuity at all ages.

We believe investors with sufficiently large pension pots will be concerned more by the risk of dying young, than they will be of running out of money and so will not be adequately modelled by preferences with $\alpha<0$. 
To model such investors, it seems necessary to go beyond homogeneous preferences and include an explicit adequacy level within the model. 
It also seems likely that utility decreasing with increasing lifetime is a necessary feature of any optimization model that will yield reasonable consumption strategies.

\begin{figure}[htb]
    \centering
    \begin{minipage}{0.49\textwidth}
        \centering
	\includegraphics[width=1.0\textwidth]{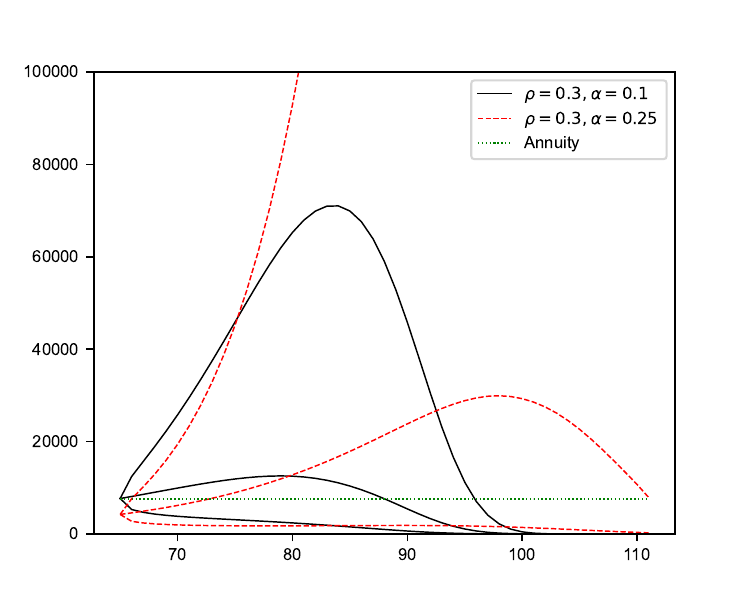}\\
    \textrm{(a)}
    \end{minipage}
    \begin{minipage}{0.49\textwidth}
        \centering
	\includegraphics[width=1.0\textwidth]{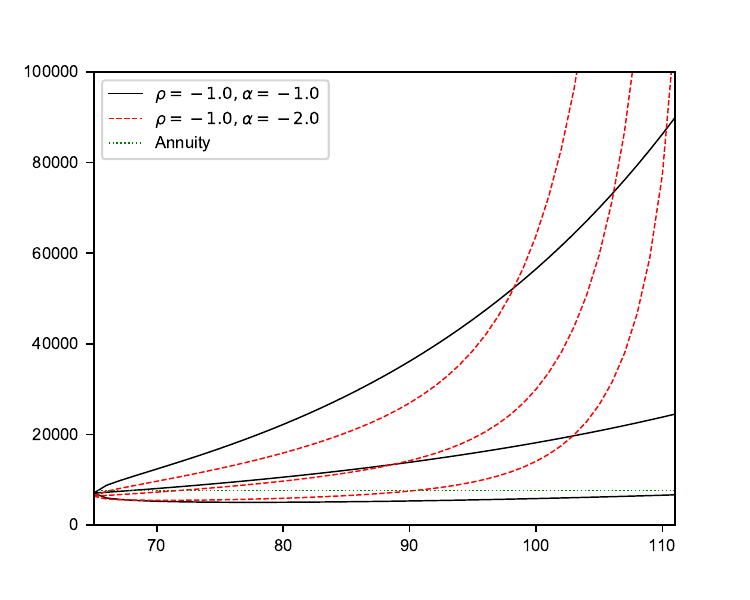}\\
    \textrm{(b)}
    \end{minipage}
    \begin{minipage}{0.49\textwidth}
        \centering
	\includegraphics[width=1.0\textwidth]{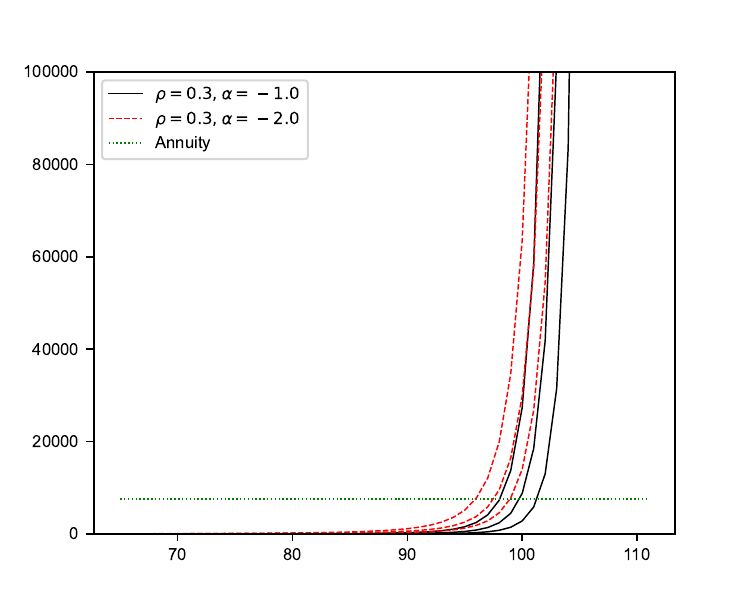}\\
    \textrm{(c)}
    \end{minipage}
	\caption{Fan diagrams of optimal consumption for an infinite collective ($n=\infty$) with homogeneous Epstein--Zin preferences. We illustrate both intertemporally additive von Neumannn--Morgenstern preferences ($\alpha=\rho$) and of satiation-risk-aversion ($\alpha<\rho$). The percentiles in the fans are $(5\%, 50\%, 95\%)$. }
	\label{fig:epstein-zin-with-riskaversion}
\end{figure}

\subsection{Heterogeneous Funds}
\label{sec:heterogeneousNumerics}

As an application of our computation for the optimal investment strategy for a finite number of individuals, we propose an algorithm
for managing heterogeneous funds of individuals with different preferences and mortality distributions.

\begin{algorithm}[Heterogeneous fund algorithm]
	\label{algo:heterogeneous}
	Choose a positive integer $n_{\max}$. Let $n_t$ denote the number of survivors at time $t$. Define $n^\prime_t$ by
	\[
	n_t^\prime = \begin{cases}
	n_t & n_t \leq n_{\max} \\
	\infty & \text{otherwise}.
	\end{cases}
	\]
	Compute the consumption and investment strategy as follows:
	\begin{enumerate}
		\item  Keep accounts of the current funds associated with
		each individual.
		\item At time $t$, for each surviving individual $i$ in the fund, invest and consume
		according to the optimal strategy for a homogeneous fund of $n_t^\prime$
		investors identical to that individual with budget given by their current
		funds. Even if, after consumption, the individual dies at time $t$,
		one should pursue the
		same investment strategy as one would have done if they had survived.
		
		Compute the resulting wealth $\mathring{X}^i_{t+\delta t}$ of each individual
		$i$ who was alive at time $t$.
		\item For an investor $i$ who survives to time $t+\delta t$, we define their
		``contribution'' to the collective at time $t+\delta t$, $\Gamma^i_{t+\delta t}$,
		by
		\[
		\Gamma^i_{t+\delta t}
		= (1-s^i_t) \mathring{X}^i_{t+\delta t}
		\]
		where $s^i_t$ is the survival property of individual $i$ from time $t$ to $t+\delta t$.
		\item When an individual dies, divide their funds among the survivors
		in proportion to each survivor's contribution $\Gamma^i_{t+\delta t}$.	      
	\end{enumerate}
\end{algorithm}

The purpose of the cut-off $n_{\max}$ in this algorithm is simply that it
is computationally expensive to compute the optimal strategy for a collective of $n$ investors, if $n$ is large.

The logic behind this algorithm is that we assume we can divide
our population into large groups of similar individuals.
Let one such group of individuals be close to one particular individual
which we label $\zeta$.
Since the individuals are similar we assume that the optimal
strategies for each member of the group will be similar. If the number
of individuals in the group is large, the optimal strategy for $n_t$ individuals
of a given type will be similar to the optimal strategy for $\infty$ individuals
of a given type. The number of survivors
will also be close to the expected value. Thus, (2) and (3)
together will ensure that the utility achieved by individuals of type close
to $\zeta$, will be close to the utility that can be obtained by an infinite
collective of individuals all of type exactly $\zeta$.

This argument is essentially a compactness and
continuity argument. One could therefore write out a formal topological
proof of its convergence as the number of individuals tends to infinity,
under suitable assumptions, and with small changes to the algorithm
similar to those used to prove Theorem \ref{thm:cvgcEpsteinZin}.
However, we do not believe doing so
would be particularly illuminating.

We note that our method of redistributing the funds of deceased members
could be improved as it is not strictly ``fair''. Let $\Q^M$ denote the pricing measure for the market and $\P^L$ the measure for individual lifetimes. One might wish to choose the specifics of the redistribution mechanism so that the investment of each individual matches the discounted $\Q^M \times \P^L$ expectation of their funds at the start of the next period. Our algorithm only achieves this approximately. See \citep{piggott,sabin2010} to see how this can be fixed, if desired.

We now test the effectiveness of Algorithm \ref{algo:heterogeneous} for a small heterogeneous fund.
We generate a random fund
of $n=100$ individuals. Each individual has inter-temporally additive von Neumann--Morgenstern 
preferences given by a power utility, with the power uniformly generated
in the range $[-1.5,-0.5]$.
The initial wealth of each individual
is taken to be uniformly distributed in the range $[0.5,1.5]$.
The retirement age of each individual is taken to be a uniformly
distributed random integer in the range $60$--$69$. The sex
of the individual is chosen as male with probability $50\%$. All of these
random choices are made independently.
The mortality distribution for each individual
is then computed using the CMI model with longterm rate $1.5\%$
and retirement year of $2019$. The market parameters are chosen as on page \pageref{page:marketParams}.

Using this population, we run $10^6$ simulations of 
the above algorithm with $n_{\max}$ set to $50$. This allows us to compute a sample
average utility, $u^i_S$, for each individual $i$.  We use von Neumann--Morgenstern preferences for our simulations, as it is much simpler to compute expected utilities from a Monte Carlo simulation in this case, than the case of Epstein--Zin preferences.
We define $u^i_n$, to be the expected utility that would be achieved by
individual $i$, if they were to invest in a homogeneous collective of $n$ 
individuals. We define the {\em optimality ratio} for individual $i$ by 
\[
\text{OR}_i := \frac{u^i_S-u^i_1}{u^i_\infty-u^i_1}.
\] 
If this ratio is close to $1$, then the utility experienced by individual $i$
is close to the optimum value they can expect from a
heterogeneous fund.

In Figure, \ref{fig:heterogeneous-histogram} we plot a histogram of the optimality ratio, $\text{OR}_i$, for each of our $100$
fund members. In our example,
the optimality ratio is almost always above $98\%$. This demonstrates
that, even with as few as $100$ investors, our investment strategy is close to the optimal value for a heterogeneous fund.

\begin{figure}
	\centering
	\includegraphics{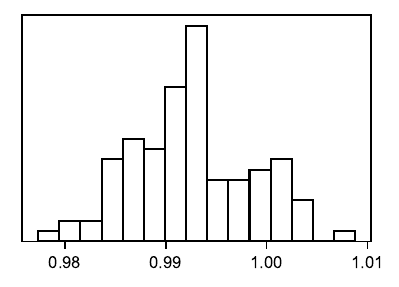}
	\caption{Histogram of the optimality ratio, ${OR}_i$, 
    obtained for a
	randomly generated fund of $100$ investors.}
	\label{fig:heterogeneous-histogram}
\end{figure}

\section{Systematic longevity risk}

Having studied the effects of idiosyncratic longevity risk, we now move on to systematic longevity risk. 
To obtain a fully tractable model for systematic longevity
risk, we need to allow consumption in continuous time and
therefore consider the continuous time version of
Epstein--Zin preferences with mortality.

Let $\lambda_t$ be the force of mortality. We define
the Epstein--Zin aggregator with mortality by
\begin{equation}
f(c,V,\lambda):=\frac{1}{\rho}c^\rho \left( \alpha V \right)^{1-\frac{\rho}{\alpha}} - \left( \frac{\alpha}{\rho} \delta + \lambda_t \right) V.
\label{eq:general_EZ_aggregator}
\end{equation}
Here, the constant $\delta \geq 0$ represents a rate of discounting corresponding to the parameter $\beta$ in the definition of discrete-time Epstein--Zin preferences with mortality.
We define the gain function for Epstein--Zin oreferences with mortality to be
the solution of the Backwards Stochastic Differential Equation 
\begin{equation}
d V_t = -f(c_t, V_t, \lambda_t) \, \ed t + Z_t \ed W_t,
\label{eq:ezBSDE}
\end{equation}
with terminal condition $\lim_{t\to\infty} V_t=0$.
We justify this definition heuristically in Appendix \ref{sec:ezAggregatorMotivation}, arguing that $V_t$
defined in this way is the continuous-time analogue
of the discrete-time quantity $\frac{1}{\alpha} Z_t^{\alpha}$.
The argument is essentially identical to the standard justification
for the definition of continuous-time Epstein--Zin preferences
without mortality, in terms of a BSDE \cite{optimal_EZ_finite}.

We will consider the optimal investment problem for
an insured drawdown fund in the Black-Scholes model, where
the force of mortality satisfies a diffusion equation of the form
\begin{equation}
d\lambda_t = \drift_\lambda(\lambda,t) dt + \vol_\lambda{\lambda_t} dW^2_t,
\label{eq:genericDriftEquation}
\end{equation}
where $W^2_t$ is a Brownian motion independent of $W^1_t$.
We assume that it is possible to fully insure idiosyncratic
longevity risk. Survivors receive longevity credits so that
their wealth process satisfies
\begin{equation}
dw_t = (\lambda_t w_t k + r(w_t-q_t S_t)-c_t+q\mu S_t )dt + q_t \sigma S_t dW^1_t,\label{eq:wealth_eq}
\end{equation}
where $q_t$ is the quantity invested in the risky asset at time $t$. $k=0,1$ controls whether the longevity credit system is implemented or not.

Our problem has two symmetries. Firstly, at any time $t$, the Black-Scholes market with initial stock price $S_t$, is equivalent to the same market with any other possible initial stock price. As a result, one expects that the optimal investment strategy will be independent of the stock price $S_t$. Secondly, we recall that the discrete-time
Epstein-Zin utility $Z$, is positively homogeneous and that $V$
is derived as being the discrete-time limit of $\alpha V^\alpha$. Hence, $V$ is positively homogeneous of order $\alpha$, that is
$V(\zeta c,\lambda)=\zeta^\alpha V(c, \lambda)$. Since we are in the Black-Scholes market, one expects that the value function will also be positively homogeneous of order $\alpha$ in the wealth.
This motivates an ansatz for the HJB equation.

\begin{proposition}
If one substitutes the ansatz
\begin{equation}
V(t,\lambda,w, S)=w^\alpha g(\lambda,t)
\label{eq:ansatz}
\end{equation}
into the HJB equation for the problem of optimizing Epstein--Zin preferences with mortality in an insured drawdown fund,
with dynamics given by equations \eqref{eq:wealth_eq} and \eqref{eq:genericDriftEquation}, one obtains the
partial differential equation:
\begin{align}
    &\alpha g\Bigg(  -\frac{\mu ^2}{2(\alpha -1) \sigma ^2} + (\alpha  g)^{\frac{\rho }{(\rho-1)\alpha }}\left(\frac{1}{\rho }-1\right)+\nonumber\\
   & \lambda  \left(k -\frac{1}{\alpha}\right)-\frac{r^2}{2(\alpha -1) \sigma ^2}+ r
   \left(\frac{\mu }{\left(\alpha -1\right) \sigma ^2}+1\right)\Bigg)+\nonumber\\
    &\frac{\partial g}{\partial t} + \drift_\lambda \frac{\partial g}{\partial \lambda}+\frac{1}{2} \vol_\lambda \frac{\partial^2 g}{\partial \lambda^2}=0\label{eq:g_eqtn}.
\end{align}
The optimal consumption rate and quantity invested in the stock
are given by:
\begin{align}
    & c^*= w (\alpha g) ^{\frac{\rho}{\alpha(\rho-1)}} \label{eq:op_consumption2},\\
    & q^*=\frac{w(\mu-r)}{S(1-\alpha) \sigma^2 }\label{eq:op_quantity2}.
\end{align}
\end{proposition}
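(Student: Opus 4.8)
The plan is to derive and then solve the Hamilton--Jacobi--Bellman equation for this recursive-utility control problem by the standard verification route, exploiting homogeneity to collapse the state space. Writing $V(t,\lambda,w,S)$ for the value function, the dynamic programming principle for an Epstein--Zin aggregator gives the HJB equation
\[
\sup_{c,q}\left[ f(c,V,\lambda) + \frac{\partial V}{\partial t} + \mathcal{L}^{c,q} V \right] = 0,
\]
where $\mathcal{L}^{c,q}$ is the spatial generator of the state process $(w_t,\lambda_t,S_t)$ determined by \eqref{eq:wealth_eq}, \eqref{eq:genericDriftEquation} and \eqref{eq:blackScholesMerton}. First I would write this generator out explicitly: it contains the wealth drift $\lambda w k + r(w-qS)-c+q\mu S$ multiplying $\partial_w$, the terms $\drift_\lambda\,\partial_\lambda$ and $\mu S\,\partial_S$, the second-order terms $\tfrac12(q\sigma S)^2\partial_{ww}$, $\tfrac12\vol_\lambda^2\partial_{\lambda\lambda}$, $\tfrac12\sigma^2 S^2\partial_{SS}$, and a single cross term $q\sigma^2 S^2\partial_{wS}$ arising from the shared Brownian motion $W^1$. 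There are no cross terms involving $\lambda$, because $W^2$ is independent of $W^1$.

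Next I would substitute the ansatz \eqref{eq:ansatz}, $V=w^\alpha g(\lambda,t)$. Since $V$ is independent of $S$, every $\partial_S$, $\partial_{SS}$ and $\partial_{wS}$ term vanishes at once, which is the analytic expression of the first (stock-price) symmetry. The optimisations over $q$ and $c$ are then pointwise. The first-order condition in $q$ is linear because the objective is quadratic in $q$, and solving it gives $q^*=w(\mu-r)/(S(1-\alpha)\sigma^2)$, i.e.\ \eqref{eq:op_quantity2}. The first-order condition in $c$ balances the aggregator's $\tfrac1\rho c^\rho(\alpha V)^{1-\rho/\alpha}$ against the $-c\,V_w$ coming from the wealth drift; this yields \eqref{eq:op_consumption2}. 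It is convenient to record the identity $(c^*)^\rho(\alpha V)^{1-\rho/\alpha}=c^* V_w$ implied by this condition, since it makes the consumption contribution collapse neatly to $(\tfrac1\rho-1)c^* V_w$.

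The decisive structural point comes on substituting $c^*$ and $q^*$ back into the HJB equation: because $V$ is positively homogeneous of degree $\alpha$ in $w$, every resulting term carries a common factor $w^\alpha$, so dividing through by $w^\alpha$ produces a partial differential equation in $g(\lambda,t)$ alone. I would then simply collect terms. The investment pieces assemble into $-\alpha(\mu-r)^2/(2(\alpha-1)\sigma^2)\,g$; expanding the square produces exactly the $\mu^2$, $r^2$ and cross $r\mu$ contributions in \eqref{eq:g_eqtn}. The $rw$ drift contributes the bare $+r$, the longevity-credit drift $\lambda w k$ together with the $-\lambda V$ in $f$ combine into $\lambda(k-\tfrac1\alpha)$, the consumption terms give $(\tfrac1\rho-1)(\alpha g)^{\rho/((\rho-1)\alpha)}$, and the $\partial_t$ and $\lambda$-diffusion terms pass through unchanged. (If the discount rate $\delta$ is retained, the term $-\tfrac\alpha\rho\delta V$ in $f$ adds a further $-\delta/\rho$ inside the parentheses; the stated equation corresponds to $\delta=0$.)

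I expect the main obstacle to be conceptual rather than computational: one must verify that the $w$-homogeneity really does factor uniformly across all terms, which is not automatic but hinges on the exponent $1-\rho/\alpha$ in the aggregator being tuned precisely so that $(\alpha V)^{1-\rho/\alpha}$ scales as $w^{\alpha-\rho}$, matching $c^\rho\sim w^\rho$ and $V_w\sim w^{\alpha-1}$ once $c^*\propto w$. The remaining effort is careful bookkeeping of constants, together with checking the second-order conditions so that the critical points are genuine maxima; this reduces to verifying the sign of $\alpha(\alpha-1)g$, which is negative because $g$ inherits the sign of $1/\alpha$ (as $V=\tfrac1\alpha Z^\alpha$ in the discrete-time limit), so both interior optima are well defined.
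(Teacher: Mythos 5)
Your proposal is correct and follows essentially the same route as the paper: derive the HJB equation from the dynamic programming principle with the Epstein--Zin aggregator, take first-order conditions in $c$ and $q$, substitute the ansatz $V=w^\alpha g$, and collect terms after dividing by $w^\alpha$ (the paper takes the FOCs before substituting the ansatz, but this ordering difference is immaterial, and your observation that the stated equation corresponds to $\delta=0$ is consistent with the paper's formula).
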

Note that the equation for the optimal investment is identical to the optimal investment strategy in the classical Merton problem.

\subsection{A stylised mortality model}\label{sec:stylised}

We now consider a concrete mortality model
that is chosen to ensure the problem has analytic solutions.
To do this, we assume that the mortality satisfies the SDE
\begin{equation}
    \ed \lambda= a \lambda^2 \, \ed t + b \lambda^{\frac{3}{2}} \, \ed W_t.\label{eq:stylised_mortality_SDE}
\end{equation}
for some constants $a$ and $b$.

This model is not intended to be a realistic model of human mortality, indeed, it significantly exaggerates systematic longevity 
risk. However, this model does have similar qualitative properties to human mortality: $\lambda_t$ will always be positive, it will explode to $+\infty$ in a finite time, ensuring that all members of the funds die in finite time and,
ignoring short-term fluctuations, mortality rates increase with age. 

The key motivation for using this mortality model is that
it is designed to lead to an analytically solvable model. This
is because the equation is time-scale invariant: if one changes the units of time, then equation \eqref{eq:stylised_mortality_SDE} is unchanged. If we also assume that $\mu=r=0$ in our market model and that $\delta=0$ in our preferences, this eliminates all time-scale dependence in our equations. Thus, with these assumptions, the HJB equation will have a scaling symmetry in time. In addition, since all our coefficients are constant, the HJB equation has a translation symmetry in time.
This suggests we use the ansatz
\begin{equation}
    g(\lambda) = B \lambda^\xi,\label{eq:style_ansatz}
\end{equation}
for equation \eqref{eq:g_eqtn}.
It is now a simple calculation to check that
\begin{theorem}
The HJB equation for insured drawdown investment ($k=1$)
with mortality model \eqref{eq:stylised_mortality_SDE} and with $\mu=r=\delta=0$, has the trivial solution zero together
with an additional analytic
solution
\begin{equation}
    V=w^\alpha \frac{\lambda ^{\frac{\alpha  (\rho -1)}{\rho }} \left(\frac{(k\alpha -1) \rho }{\alpha  (\rho -1)}+a+\frac{b^2 (\alpha  (\rho -1)-\rho )}{2 \rho
   }\right)^{\frac{\alpha  (\rho -1)}{\rho }}}{\alpha }\label{eq:analytical},
\end{equation}
so long as the term in brackets is positive.
\label{thm:stylisedMortality}
\end{theorem}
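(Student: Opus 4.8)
The plan is to verify the claimed solution by direct substitution of the power-law ansatz \eqref{eq:style_ansatz} into the reduced form of \eqref{eq:g_eqtn}, exploiting the scaling that the stylised model was designed to produce. First I would specialise the PDE to the hypotheses of the theorem. Setting $\mu=r=\delta=0$ removes every coefficient inside the bracket of \eqref{eq:g_eqtn} except the consumption term $(\alpha g)^{\frac{\rho}{(\rho-1)\alpha}}\left(\frac{1}{\rho}-1\right)$ and the mortality term $\lambda\left(k-\frac{1}{\alpha}\right)$, which with $k=1$ becomes $\lambda\left(1-\frac{1}{\alpha}\right)$; and the generator of \eqref{eq:stylised_mortality_SDE} contributes $\drift_\lambda\,\partial_\lambda g + \frac{1}{2}\vol_\lambda\,\partial_\lambda^2 g = a\lambda^2\,\partial_\lambda g + \frac{1}{2}b^2\lambda^3\,\partial_\lambda^2 g$. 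Since $g=B\lambda^\xi$ is time-independent the $\partial g/\partial t$ term vanishes (this is the time-translation symmetry noted before the statement), leaving an ODE in $\lambda$ with $\partial_\lambda g = B\xi\lambda^{\xi-1}$ and $\partial_\lambda^2 g = B\xi(\xi-1)\lambda^{\xi-2}$.

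Next I would match powers of $\lambda$. Under the ansatz the mortality term $\alpha g\,\lambda(1-\frac{1}{\alpha})$, the drift term $a\lambda^2\partial_\lambda g$ and the diffusion term $\frac{1}{2}b^2\lambda^3\partial_\lambda^2 g$ all scale as $\lambda^{\xi+1}$, whereas the consumption term $\alpha g\,(\alpha g)^{\frac{\rho}{(\rho-1)\alpha}}(\frac{1}{\rho}-1)$ scales as $\lambda^{\xi\eta}$ with $\eta:=1+\frac{\rho}{(\rho-1)\alpha}$. For the ansatz to satisfy the equation identically in $\lambda$ these exponents must coincide, i.e.\ $\xi\eta=\xi+1$, which forces the single value $\xi=\frac{\alpha(\rho-1)}{\rho}$ --- precisely the exponent appearing in \eqref{eq:analytical}. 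This is the step where the design of the stylised model pays off: the exponents $2$ and $\tfrac{3}{2}$ in \eqref{eq:stylised_mortality_SDE}, together with $\mu=r=\delta=0$, are exactly what allow a single choice of $\xi$ to reconcile the nonlinear consumption term with the generator terms.

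With $\xi$ fixed, every surviving term carries the common factor $\lambda^{\xi+1}$, so dividing through leaves a purely algebraic equation of the form $c_1 B^{\eta}+c_2 B=0$. Factoring out one power of $B$ (discarding the root $B=0$, which is the stated trivial solution $V\equiv 0$) and using $\eta-1=\frac{1}{\xi}$ reduces this to $c_1 B^{1/\xi}+c_2=0$, which I would solve for $B^{1/\xi}$ and hence for $B=\frac{1}{\alpha}(\cdots)^\xi$. The identity $\alpha(\frac{1}{\rho}-1)=-\xi$ then rewrites the resulting coefficient into the bracket $\frac{(k\alpha-1)\rho}{\alpha(\rho-1)}+a+\frac{b^2(\alpha(\rho-1)-\rho)}{2\rho}$ of \eqref{eq:analytical}; multiplying back by $w^\alpha$ via the ansatz \eqref{eq:ansatz} recovers $V$.

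I do not expect a genuine obstacle, as this is a verification argument rather than an existence proof; the only points requiring care are bookkeeping the fractional exponent $\eta$ correctly during the power-matching, and the positivity caveat. Because $\xi$ is generally non-integer, the base raised to the power $\xi$ must be positive for $g$ --- and hence $(\alpha g)^{\frac{\rho}{(\rho-1)\alpha}}$ --- to be real and to select the correct solution branch; this is exactly the hypothesis \emph{``so long as the term in brackets is positive''}. I would state this condition explicitly and observe that it demarcates the parameter regime in which the nontrivial analytic solution coexists with the trivial one.
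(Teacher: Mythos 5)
Your proposal is correct and is exactly the ``simple calculation'' the paper invokes: substitute the time-independent power-law ansatz $g=B\lambda^{\xi}$ into \eqref{eq:g_eqtn} with $\mu=r=\delta=0$ and the generator $a\lambda^{2}\partial_{\lambda}+\tfrac12 b^{2}\lambda^{3}\partial_{\lambda}^{2}$, match powers of $\lambda$ to force $\xi=\alpha(\rho-1)/\rho$, and solve the residual algebraic equation for $B$ (with $B=0$ giving the trivial solution). Your power-matching, the identity $\alpha(\tfrac1\rho-1)=-\xi$, and the reading of the positivity caveat as the condition for the fractional power to be real all agree with the paper's derivation.
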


Rigorously proving that
our analytic solution is the unique solution to the optimization problem would be somewhat challenging, but we believe the papers \cite{optimal_EZ_finite}, \cite{optimal_EZ_infinite} provide a basic roadmap for how one might do this. We believe the key step in such a proof will be to replace the time variable $t$, with a new variable $s$, satisfying the equation:
\[
ds = \lambda_t dt.
\]
In this new coordinate system, the problem becomes an infinite-horizon Epstein--Zin utility control problem with $\lambda_s$ now following a geometric Brownian motion, rather than exploding in finite time.

When the optimization problem is ill-posed, the supremum of the value function may be $-\infty$, $0$ or $\infty$. This explains why
we sometimes obtain complex solutions to the HJB equation. By evaluating the value-function for other strategies that fit the ansatz, but which do give real values, one can analyze the supremum of the value function in those cases where the supremum is not attained. See Appendix \ref{sec:complexSolutions} for an illustrative example.

Given an optimal drawdown problem, we can set the value of $b$ to zero to obtain an associated drawdown problem without any systematic longevity risk. In order for the original drawdown problem to have the same value function as the drawdown problem without systematic longevity risk, we will need to change the initial wealth by a certain percentage relative to the associated problem without systematic risk. 
We will call this percentage ``the cost of systematic mortality risk'' for the problem. We plot the cost of systematic mortality risk for different values of $\alpha$ and $\rho$ in Figure \ref{fig:costOfSystematicMortalityRisk}, in the case $a=4$, $b=1$ and $\lambda(0)=0.01$. The plot is complex,
but as a brief summary one can see that the cost of systematic mortality risk varies within each quadrant in either a clockwise or anti-clockwise direction, jumping as one moves from quadrant to quadrant. Note that only the qualitative features of this figure
are important as we have made no attempt to calibrate the model
to data.

The most interesting point is that the cost of systematic mortality risk may be either positive or negative depending on the values of $\alpha$ and $\rho$. This may seem surprising at first, but it can be explained by the tension between the change in risk that occurs as $b$ changes with the increase in information that also occurs (this tension is balanced along the curve $\alpha=\rho/(\rho-1)$). This highlights an important difference between an annuity and insured drawdown. An annuity provider sets a price at retirement, but in insured drawdown it is possible to react to new information as it comes in and adjust one's consumption strategy.
The supremum of the value function for the insured drawdown problem with systematic longevity risk will always be greater than the value function for the insured drawdown problem without systematic longevity risk, if the mortality distributions conditioned on the information at time zero are the same. This is because receiving information increases the size of the set of admissible strategies, and so can only increase the value. By contrast the cost of an annuity will be expected to go up in this situation, making the strategy of purchasing an annuity at retirement less attractive.

To give a more precise statement, we specialise to the case of von Neumann-Morgenstern utility, where $\alpha=\rho$. The value function then simplifies to
\begin{equation}
V = \frac{w^\alpha}{\alpha} \lambda^{\alpha-1}
\left( \frac{1-k}{1-\alpha} + a + \frac{b^2 (\alpha-2)}{2} \right)^{\alpha-1}.
\end{equation}
Hence, when $\alpha>0$ we find that the value is decreasing in $a$ and increasing in $b$. Given our remarks earlier on the relationship between the sign of $\alpha$ and pension adequacy, we see that this corresponds to an investor with an adequate pension who will be glad if $a$ is small as they will have more time to enjoy their pension, and will also benefit from knowing if their mortality has increased so they can increase spending. When $\alpha<0$, the value is increasing in $a$ and decreasing in $b$. In this case, the investor has an inadequate pension and so they are concerned if $a$ is too low that they will not have enough to live on. It also appears more difficult for such an investor to benefit from increased information about mortality, and the benefit of information is outweighed by the increased risk.

\begin{figure}
    \centering
    \begin{minipage}{0.6\textwidth}
        \centering
        \includegraphics[width=1.0\textwidth]{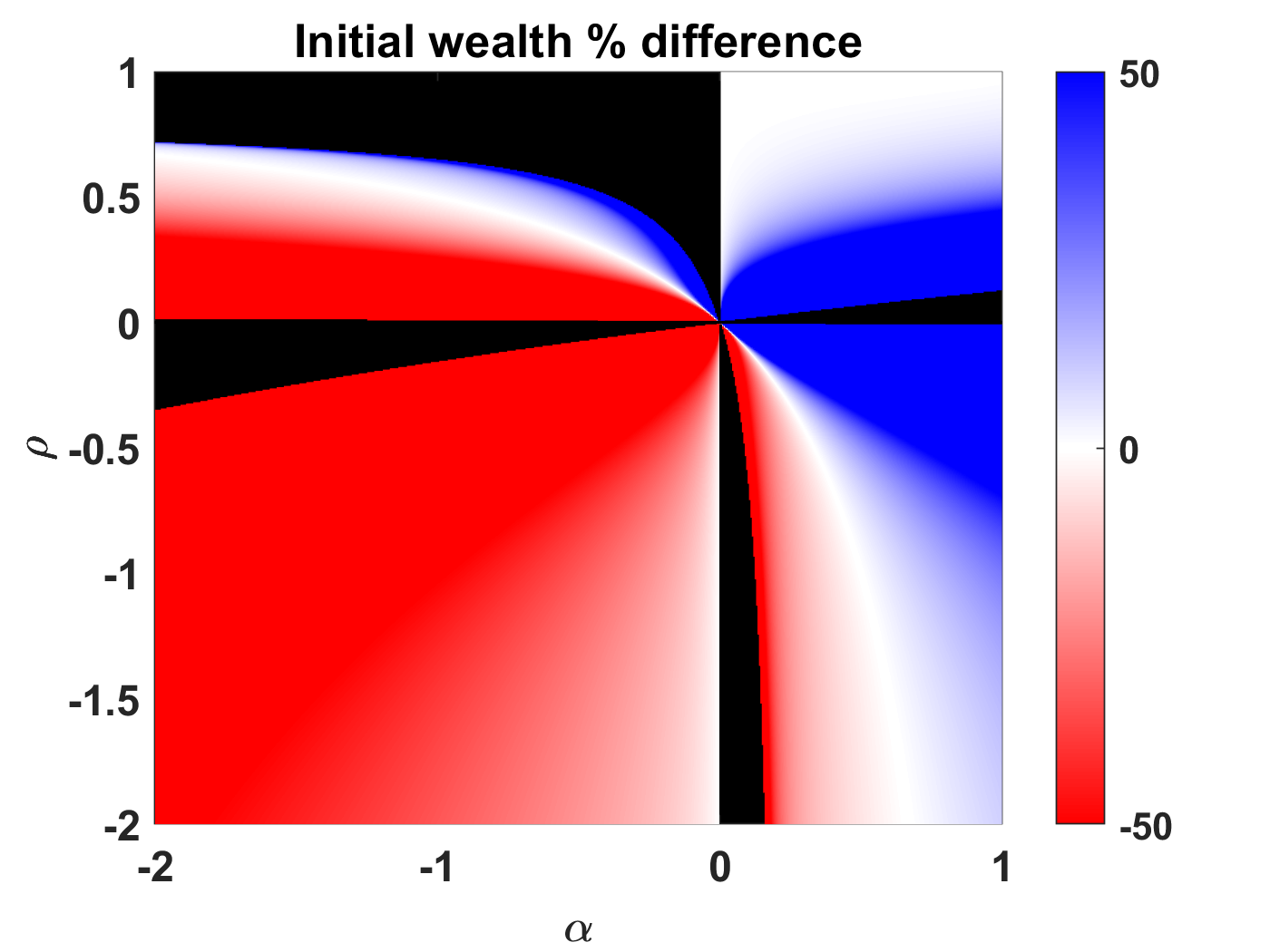}    \end{minipage}
        \caption{The cost of systematic longevity risk
        in our model when $a=4$, $b=1$, $\mu=r=0$ and $\lambda=0.01$.
        Regions in black correspond to points where the analytic solution of the problem, or the associated problem without systematic longevity risk, is complex, so one or other of the problems is ill-posed. 
        The same shade of red is used whenever the cost of longevity risk is 50\% or greater, and the same shade of blue is used whenever the cost of longevity risk is -50\% or less.
        }
    \label{fig:costOfSystematicMortalityRisk}
\end{figure}

\subsection{A realistic mortality model: Cairns-Blake-Dowd}\label{sec:cbd}
We now consider a realistic mortality model to provide meaningful numbers on the differences in performance between a insured drawdown fund with and without systematic longevity risk.
We consider a 1-factor continuous time analogue of the two-factor Cairns-Blake-Dowd model, which leads to the following mortality rate equation:
\begin{multline}
    d\lambda=((e^{\lambda} - 1)(B_1 t(e^{\lambda} - 1) + B_1 t + B_2(B_3 t^2 +\\ B_4 t + B_5) + B_6 e^{\lambda})/((e^{\lambda} - 1)^2 + 2 e^{\lambda} - 1)) dt \\
    + (B_7 (e^{\lambda} - 1)(B_8 t^2 + B_9 t + 1)^{\frac{1}{2}} e^{-\lambda}) dW_t. \label{eq:CBD_lambda}  
\end{multline}
The constant coefficients $B_i$ can be found in \Cref{table:coefficients}. Details of the derivation of this equation can be found in \Cref{sec:CBD_details}.
\begin{table}
\begin{centering}
\begin{tabular}{ll}
 \hline
     Coefficient & Value \\
    \hline
     $B_1$ &  0.00118 \\
     $B_2$ &  0.00306 \\
    $B_3$ &  $1.08\times 10^{-5}$ \\
    $B_4$ &  0.00140 \\
    $B_5$ &  1.05 \\
    $B_6$ &  0.137 \\
    $B_7$ &  0.0799 \\
    $B_8$ & $1.03\times 10^{-5}$ \\
    $B_9$ &  0.00134 \\
    \hline
\end{tabular}
\caption{Coefficients for the SDE \eqref{eq:CBD_lambda}}
\label{table:coefficients}
\end{centering}
\end{table}
\noindent
A fan diagram of simulated mortality rates and the probability density function of the time of death, is shown in \Cref{fig:CBD}.
\begin{figure}
    \centering
    \begin{minipage}{0.47\textwidth}
    \centering
        \includegraphics[width=1.0\textwidth]{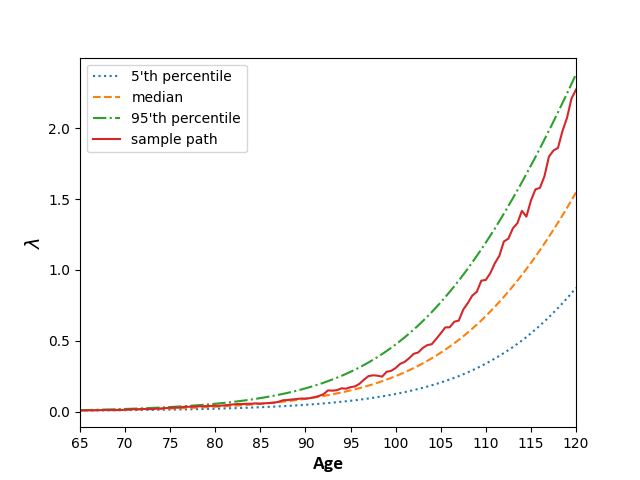}\\
        \ \textrm{(a)}
    \end{minipage}
    \begin{minipage}{0.47\textwidth}
    \centering
        \includegraphics[width=1.0\textwidth]{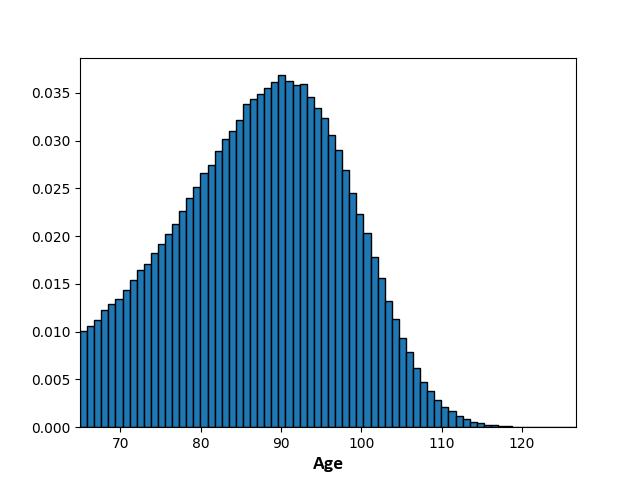}\\
        \ \textrm{(b)}
    \end{minipage}
        \caption{(a) Fan diagram of the mortality rate \eqref{eq:CBD_lambda} and (b), the probability density function of the time of death. Both are generated from $10^6$ scenarios with 100 equally spaced time steps.}
    \label{fig:CBD}
\end{figure}

For this mortality model, our HJB equation is \eqref{eq:g_eqtn} with $\drift_\lambda$ and $\vol_\lambda$ replaced by the drift and volatility in \eqref{eq:CBD_lambda}. Full details of our numerical approach for solving this equation can be found in \Cref{sec:numerical_approach}. Our choice of market parameters are as in \Cref{sec:discrete_model}. Regarding choices of $\alpha$ and $\rho$, recall from \Cref{sec:complexSolutions}, that when $\alpha,\rho$ are of opposite sign in our stylised model, a maximum either does not exist, or the value function is zero. For the CBD model, numerical results suggest we obtain a value function of zero in these cases. Hence, we focus on cases when $\alpha$ and $\rho$ are of the same sign and meaningfully solutions should exist.

In \Cref{table:longevity table}, we provide a summary of the cost of systematic longevity risk for different values of $\alpha$ and $\rho$. The table follows the same qualitative trends predicted by our stylised mortality model. That is, for $\alpha>0$ $(\rho>0)$ the cost is negative and for $\alpha<0$ $(\rho<0)$ the cost is positive (when $\alpha$ and $\rho$ are of opposite sign, we find the value function to be zero when systematic longevity risk is and is not present, at least for the parameter values we have checked). 
Focusing on the case $\alpha<0$, where systematic longevity risk has a detrimental impact on the fund, we see provided $\alpha$ is not too small (i.e., $-2\leq\alpha<0$),  systematic longevity risk has around a 6\% (or less) impact on the starting wealth a fund requires. As a result, systematic longevity may not be considered a significant risk factor for most collective funds that are not highly risk averse. However, for more extreme choices of $\alpha,\rho$, e.g. $\alpha=-10,\rho=-1$, the cost of systematic longevity risk can be as much as 22.4\%. In this situations there is likely to be considerable benefit from purchasing insurance contracts on systematic longevity risk, a question we will address in future work. 

\begin{table}
\begin{tabularx}{\textwidth} { 
  | >{\raggedright\arraybackslash}X 
  | >{\centering\arraybackslash}X 
  | >{\centering\arraybackslash}X
  | >{\centering\arraybackslash}X
  | >{\centering\arraybackslash}X
  | >{\centering\arraybackslash}X
  | >{\raggedleft\arraybackslash}X | }
 \hline
   & $\alpha=-10,\newline \rho=-1$& $\alpha=-2,\newline \rho=-1$ & $\alpha=-1/2,\newline \rho=-1$ & $\alpha=-1/2,\newline \rho=-1/2$ & $\alpha=1/2, \newline\rho=1/2$& $\alpha=3/4,\newline \rho=3/4$\\
 \hline
 Initial wealth  difference & 22.4\% & 6.45\% & 3.06\% & 6.02\% & -7.34\%
 & -22.8\%
 \\
 \hline
\end{tabularx}
\caption{Cost of systematic longevity risk under the CBD mortality model.}
\label{table:longevity table}
\end{table}

With the impact of systematic longevity quantified, we now analyse how a collective fund in the presence of systematic longevity risk performs in comparison to an annuity. As in \Cref{sec:discrete_model}, we use an initial fund value of £126,636 for our collective scheme, which under our CBD mortality model purchases an annuity of approximately £5,100 a year. In the remainder of this section, we consider von Neumann-Morgenstern preferences with $\alpha=\rho$ for ease of computation.

Fan diagrams of the consumption streams with and without investment in a risky asset when $\alpha=\rho=0.5$, are presented in \Cref{fig:annuity_comparison}. With investment in a risky asset, the median scenario for the collective fund outperforms an annuity with a total consumption of £900,000 (which is unrealistically high). The median scenario delivers a highly inconsistent consumption stream with little consumption in the early years of retirement and negligible consumption from age 115 onwards, but with extremely high levels of consumption in between this range. When there is no risky asset, the median consumption is above the annuity level until the age of 90 and falls to zero by age 110. The behaviour here is analogous to that in \Cref{fig:epstein-zin-with-riskaversion} (a) and for the same reasons we do not believe these plots represent a realistic consumption strategy.

We repeat the same exercise for $\alpha=\rho=-0.5$ in \Cref{fig:annuity_comparison_negative_alpha}. With investment in a risky asset the collective fund outperforms an annuity, the median scenario is always above the annuity, while the 5'th percentile of scenarios dips slightly below the annuity in some years. When no risky asset is present, the median scenario outperforms the annuity up to age 90 and from age 110 onwards. These consumption strategies seem to be more realistic than the those seen in the previous figure and mirror those in \Cref{fig:epstein-zin-with-riskaversion} (b). Both \Cref{fig:annuity_comparison} and \Cref{fig:annuity_comparison_negative_alpha} highlight the value of longevity pooling and investing in risky assets post retirement.

\begin{figure}
    \centering
    \begin{minipage}{0.49\textwidth}
        \includegraphics[width=1.0\textwidth]{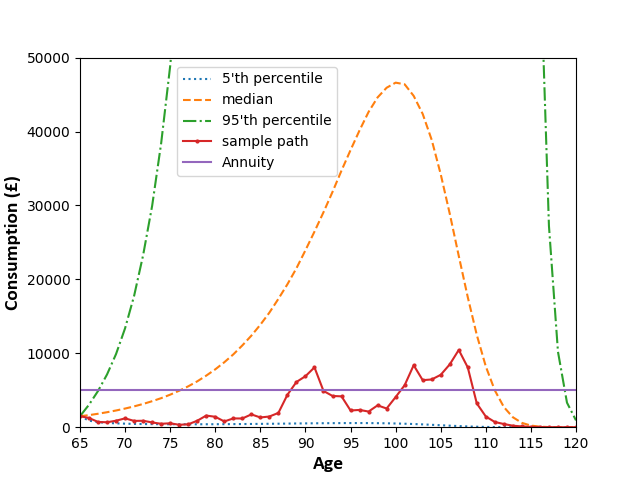}
    \end{minipage}
    \begin{minipage}{0.49\textwidth}
        \centering
        \includegraphics[width=1.0\textwidth]{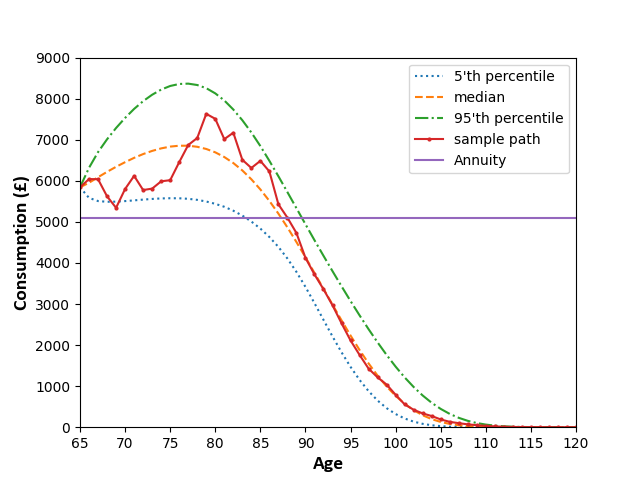}\\
    \end{minipage}
        \caption{Consumption stream with (left) and without (right) a risky asset for a pooled fund with systematic longevity risk. The funds start with an initial wealth of £126,636. Here, $\alpha=\rho=0.5$, $10^6$ scenarios are generated and 90 equally spaced time steps are used.}
    \label{fig:annuity_comparison}
\end{figure}

\begin{figure}
    \centering
    \begin{minipage}{0.49\textwidth}
        \includegraphics[width=1.0\textwidth]{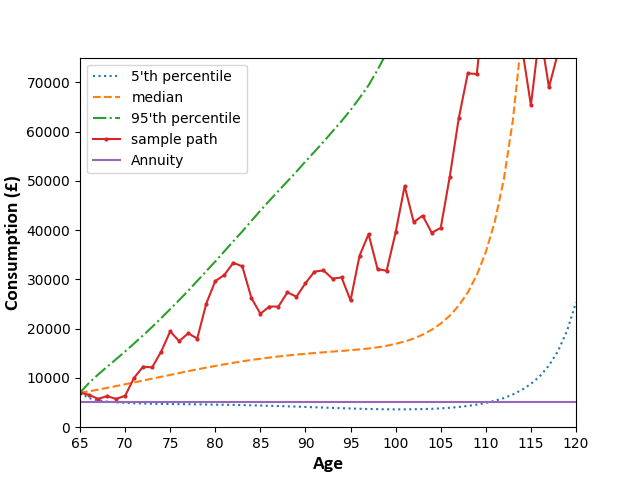}
    \end{minipage}
    \begin{minipage}{0.49\textwidth}
        \centering
        \includegraphics[width=1.0\textwidth]{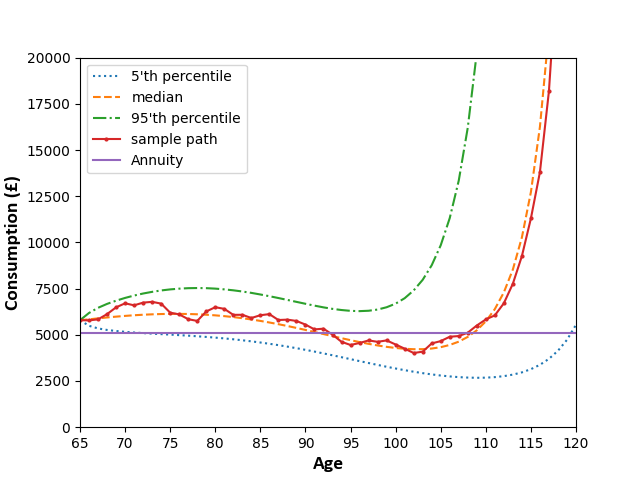}\\
    \end{minipage}
        \caption{Consumption stream with (left) and without (right) a risky asset for a pooled fund with systematic longevity risk. The funds start with an initial wealth of £126,636. Here, $\alpha=\rho=-0.5$, $10^6$ scenarios are generated and 90 equally sized time steps are used.}
    \label{fig:annuity_comparison_negative_alpha}
\end{figure}

\section{Conclusions}\label{sec:conclusions}

In this paper, we study optimal consumption-investment problems for collective pension funds characterised by a tontine/longevity credit system within the Black-Scholes market. In the interest of analytic tractability, in both discrete and continuous time, we model investors preferences over pension outcomes through Epstein-Zin preferences. This framework allows us to provide informative analytic formulae which illuminate the impact of investors preferences and longevity risks on pension outcomes.

In discrete time, we consider the case where investors are exposed to idiosyncratic longevity risk only. For a homogeneous collective, we solve the problem via difference equations that determine  the optimal strategy for both finite and infinite funds. This allows us to elucidate the effects of risk aversion, $\alpha$, and satiation, $\rho$, on consumption with and without market risk. 
Furthermore, applying the optimal strategy for an infinite homogeneous collective to each investor in a heterogeneous collective, we see the resulting outcomes are still effectively optimal. Thus, the solution to the homogeneous case provides a useful tool for the management of complex heterogeneous funds.

In continuous time, we additionally incorporate systematic longevity risk. Using a stylised mortality model, we analytically solve the associated HJB equation and explicitly obtain the optimal strategies. This reveals the surprising fact that systematic longevity risk can improve pension outcomes provided $\alpha>\rho/(\rho-1)$. This occurs because the benefits of gaining additional information about one's mortality over time outweigh the detrimental effect of changing mortality risk. These effects are balanced when $\alpha=\rho/(\rho-1)$, while risk outweighs information for $\alpha<\rho/(\rho-1)$, leading to worse outcomes. These trends remain true for the case of a more realistic 1-factor continuous time CBD mortality model, at least for $\alpha,\rho>0$ and $\alpha,\rho<0$. Here, we solve the HJB equation numerically and provide estimates on the cost of systemic longevity risk. Provided individuals are not too strongly risk averse ($\alpha<0$ is not too small), systematic longevity risk requires approximately 6\% (or less) additional wealth to be put into the fund initially to achieve equivalent outcomes.

Our results demonstrate that, as would be expected,  there is considerable benefit to be gained by investing in a collective fund with longevity-risk pooling over a fixed-rate annuity. This is because of the ability to include some investment in the risky asset, the ability to benefit from inter-temporal substitution and the ability to adaptively vary one's strategy. In the face of systematic longevity risk, these advantages become more striking: hedging systematic longevity risk is extremely challenging for an annuity provider, whereas a collective fund can gradually vary consumption levels in response to new information about mortality.

An additional conclusion from our work is the need for better preference models. While Epstein-Zin preferences lead to tractable models, and they have been successful in solving asset pricing puzzles, the consumption strategies they produce are not always reasonable. The key issue is that for $\alpha>0$, investors are willing to accept little or no income in some years, while for $\alpha<0$ individuals try to target an infinite income as they approach death. This is a result of the positive homogeneity of Epstein-Zin preferences. Hence, inhomogeneous optimization models that impose minimum consumption or adequacy levels (which every individual will need in retirement), are required.

\section*{Acknowledgments}
JA and JD gratefully acknowledge funding from the Nuffield Foundation (grant FR-000024058).


\bibliographystyle{abbrvnat}
\bibliography{collectivization}

\appendix

\section{Proofs and technical appendices}
\label{appendix:longevitypooling}

\subsection{The optimal control problem with no systematic longevity risk}

From our assumption of non systematic longevity risk, we have independent identically distributed random variables $\tau_i$ representing the last time of consumption of individual $i$. Write $n_t$ for the number of survivors at time $t$, that is the number of individuals whose where $\tau_i>t$.
We will write down optimal control problems for both finite
and infinite collectives.

Write $a_t$ for proportion of the fund invested in stock at time $t$.
Write $X_t$ for the value of the fund per survivor at time
$t$ before consumption.
Define $X_t=0$ if $n_t=0$.
Similarly, let ${\overline{X}}_t$ denote the value of the fund per
survivor after consumption. Note that $X_t=\lim_{h\nearrow t} \overline{X}_h$ at time points $t \in {\cal T}$. At intermediate
times, $t \in [i\delta t, (i+1)\delta t)$, ${\overline{X}}_t$
obeys the SDE
\[
\ed \overline{X}_t = \overline{X}_t( a_t \mu + (1-a_t) r) \, \ed t
+ \overline{X}_t a_t \sigma \, \ed W_t,
\]
with initial condition given by the budget equation
\begin{equation}
\overline{X}_t = \begin{cases}
\frac{n_{t}}{n_{t+\delta t}} (X_t - c_t) & n \text{ finite} \\
s_t^{-1} (X_t - c_t) & n=\infty
\end{cases}
\label{eq:budgetEquationGeneral}
\end{equation}
unless $n_{t+\delta t}=0$ in which case we define $\overline{X}_t$ to be zero on $[t, t+\delta t)$. Observe that this formula
is based on our convention that an individual who dies at a time $t$ still consumes at that time and the corresponding convention for $n_t$ which ensures $n_{t+\delta t}$ is ${\cal F}_t$ measurable.

Let $\tilde{\cal A}(x,t_0)$ denote the space of admissible controls $(c_t,a_t)$: that is ${\cal F}_t$-predictable processes such that
$0 \leq c_t \leq X_t$ and with $X_{t_0}=x$. We define the value
function of our problem starting at time $t_0$ to be
\begin{equation}
v_n(x,t_0) = \sup_{(c,a) \in \tilde{\cal A}(x,t_0)} Z_{t_0}(c, \tau_1).
\label{eq:defValueFunction}
\end{equation}
where $Z_{t_0}$ is an Epstein--Zin utility function.

\subsubsection{The cases \texorpdfstring{$n=1$}{n=1} and \texorpdfstring{$n=\infty$}{n=infinity}}
\label{sec:epsteinZinFormulae}

To avoid notational complexity we consider separately the
simple cases when $n=1$ and $n=\infty$,
leaving the case of general $n$ until Section \ref{sec:epsteinZinFiniteCollective}.
We define $C$ to distinguish these cases as in equation \eqref{eq:defC}.

We write $z_t:=v_n(1,t)$ so
that the positive-homogeneity of $Z_t$ implies that
\begin{equation}
v_n(x,t)=x\, z_t.
\label{eq:zscaling}
\end{equation}
We note that we have not yet shown whether $z$ is finite, but equation \eqref{eq:zscaling} can still be interpreted for infinite values of $z$.

Let $c^r_t$ denote the consumption rate at time $t$, so $c_t=c^r_t X_t$
for an individual who is still alive at time $t$.
The budget equation \eqref{eq:budgetEquationGeneral} 
can then be written as
\begin{equation}
\overline{X}_t = s_t^{-C} (1-c^r_t) X_t.
\label{eq:budgetEquation}
\end{equation}

We now let ${\cal A}_{t+{\delta t}}(x,c^r_t)$ be the set of random variables
$X_{t+\delta t}$, representing the value of our investments at time $t+\delta t$,
that can be obtained by a continuous time trading strategy with
an initial budget given by \eqref{eq:budgetEquation} with $X_t=x$. Thus
${\cal A}_{t+\delta t}(x,c^r_t)$ is the set of admissible investment returns given the budget
and the consumption.

The Markovianity of Epstein--Zin preferences and the definition of $Z_t$ to compute allow us to apply the dynamic programming principle to compute
\begin{align}
v_n(x,t) &= \sup_{c^r_t \geq 0, X_{t+\delta t}\in {\cal A}_{t+\delta t}(x,c^r_t)}
[
(x\, c^r_t)^\rho
+ \beta \{s_t \E_t( v_n(X_{t+\delta t}, t+\delta t)^\alpha )\}^\frac{\rho}{\alpha}
]^\frac{1}{\rho} \nonumber \\
&= \sup_{c^r_t \geq 0, X_{t+\delta t}\in {\cal A}_{t+\delta t}(x,c^r_t)}
[
(x\, c^r_t)^\rho
+ \beta z_{t+\delta t}^\rho  \{ s_t \E_t( X_{t+\delta t}^\alpha )\}^\frac{\rho}{\alpha}
]^\frac{1}{\rho}.
\label{eq:valueFunction1}
\end{align}
The second line follows from the first by the positive homogeneity of Epstein--Zin utility \eqref{eq:zscaling}.

If $\alpha>0$, we may compute the value of 
\[
\sup_{X_t \in {\cal A}_{t+\delta t}(c^r_t)} \E_t( X_{t+\delta t}^\alpha )
\]
by solving the Merton problem for optimal investment over time
period $[t, t+ \delta t]$, with initial budget $\overline{X}_t$, no consumption,
and terminal utility function $u(x)=x^\alpha$.

We find
\begin{equation}
\sup_{X_t \in {\cal A}_{t+\delta t}(c_t)} \E_t( X_{t+\delta t}^\alpha ) = (\exp( \xi \, \delta t) \overline{X}_t)^\alpha
\label{eq:mertonOptimum}
\end{equation}
where
\begin{align}
\xi &= \sup_{a \in \R}[ a(\mu - r) + r - \frac{1}{2}a^2(1-\alpha)\sigma^2]
\end{align}
For details see Merton's paper \cite{merton1969lifetime} or \cite{pham} equations (3.47) and (3.48). We find that $\xi$ is as given in equation \eqref{eq:defxi}
and moreover the proportion invested in stock
is given by $a^*$ which is a constant determined entirely by $\alpha$ and the market.
In the case where $\alpha<0$ we must instead compute
\[
\inf_{X_t \in {\cal A}_{t+\delta t}(c_t)} \E_t( X_{t+\delta t}^\alpha ).
\]
However, apart from the change of a $\sup$ to an $\inf$ everything is algebraically identical, so the same formulae emerge.

Putting the value \eqref{eq:mertonOptimum} into our
expression \eqref{eq:valueFunction1} for the value function we obtain
\begin{align}
v_n(x,t) &= 
\sup_{c^r_t\geq 0}
[
(x\, c^r_t)^\rho
+z_{t+\delta t}^\rho \beta \, (s_t \exp( \alpha \xi \, \delta t) \overline{X}_t^\alpha)^\frac{\rho}{\alpha} ]^\frac{1}{\rho}
\nonumber \\
&= 
\sup_{c^r_t\geq 0}
[
(x\, c^r_t)^\rho
+\beta \, (z_{t+\delta t} \exp( \xi \, \delta t) s_t^{(\frac{1}{\alpha}-C)} (1-c^r_t) X_t)^{\rho} ]^\frac{1}{\rho}
\label{eq:valueFunction2}
\end{align}
where the last line follows from the budget equation \eqref{eq:budgetEquation}.
We define
\begin{equation}
\phi_t:=\beta^\frac{1}{\rho} \exp(\xi \delta t) s_t^{(\frac{1}{\alpha}-C)}
\quad \theta_t:=\phi_t z_{t+\delta t}
\label{eq:defphi}
\end{equation}
so that \eqref{eq:valueFunction2} may be written as
\begin{align}
z_t &= 
\sup_{c^r_t\geq 0}
[
(c^r_t)^\rho
+\theta_t^\rho (1-c^r_t)^\rho ]^\frac{1}{\rho}
\label{eq:valueFunction3}
\end{align}

Differentiating the expression in square brackets on the right-hand
side, we see that the supremum is achieved in equation \eqref{eq:valueFunction2}
when $c^r_t=c^*_t$, where $c^*_t$ satisfies
\[
\rho \, (c^*_t)^{\rho-1} - \rho \theta_t^\rho((1-c^*_t)^{\rho-1} = 0,
\]
or, if this yields a negative value for $c^*_t$, we should take $c^*_t=0$.
Simplifying we must have
\begin{equation}
\left( \frac{c^*_t}{1-c^*_t} \right)^{\rho-1} = \theta_t^\rho.
\label{eq:cStarRelation}
\end{equation}
So
\begin{equation}
c^*_t = 
(1+\theta_t^\frac{\rho}{1-\rho})^{-1}.
\label{eq:defcstar}
\end{equation}
This expression for $c^*_t$ is non-negative, so it always gives the argument for the supremum  in \eqref{eq:valueFunction2}. We obtain
\begin{align}
z_t
&=
[(c^*_t)^\rho
+  \theta_t^\rho (1-c^*_t)^{\rho} ]^\frac{1}{\rho} \nonumber \\
&=
c^*_t \left[ 1
+  \theta_t^\rho \left(\frac{1-c^*_t}{c^*_t}\right)^{\rho} \right]^\frac{1}{\rho} \nonumber \\
&=
c^*_t \left[ 1
+  \theta_t^\rho \left(\theta_t^\frac{\rho}{\rho-1} \right)^{-\rho}
\right]^\frac{1}{\rho}, \quad \text{by \eqref{eq:cStarRelation}}, \nonumber \\
&=
c^*_t \left[ 1
+  (\theta_t^{1-\frac{\rho}{\rho-1}})^{\rho}
\right]^\frac{1}{\rho} \nonumber \\
&=
c^*_t \left[ 1
+  \theta_t^\frac{\rho}{1-\rho}
\right]^\frac{1}{\rho} \nonumber \\
&=
(1 + \theta_t^\frac{\rho}{\rho-1})^{-1} ( 1
+  \theta_t^\frac{\rho}{1-\rho}
)^\frac{1}{\rho}, \quad \text{ by \eqref{eq:defcstar}}, \nonumber \\
&=
( 1 +  \theta_t^\frac{\rho}{1-\rho})^\frac{1-\rho}{\rho}. \nonumber
\end{align}
We conclude that
\begin{equation}
z_t^\frac{\rho}{1-\rho} = 1 + \theta_t^\frac{\rho}{1-\rho} = 1 + \phi_t^\frac{\rho}{1-\rho} z_{t+\delta t}^\frac{\rho}{1-\rho}
\label{eq:valueFunction4}
\end{equation}
where $\phi$ is given by equation \eqref{eq:defphi}. We observe also that equation \eqref{eq:defcstar} for the optimal consumption rate per survivor simplifies to \eqref{eq:cStarConclusion}.

This completes the proof of Theorem \ref{thm:ezIdioscynractic} in the cases $n=1$ and $n=\infty$.

\subsubsection{Finite funds}
\label{sec:epsteinZinFiniteCollective}

Let $v_n$ be the value function \eqref{eq:defValueFunction} for
the optimal investment problem for $n$ individuals with
homogeneous Epstein--Zin preferences. By positive homogeneity
we may define $z_{n,t}:=v_n(1,t)$, so that $v_n(x,t)=x z_{n,t}$.

Let $I_{i,t_0+\delta t}$ denote the event that both
\begin{enumerate}[(i)]
	\item there are $i$ survivors at time $t_0+\delta t$, i.e.\ $n_{t+\delta t}=i$.
	\item the individual whose utility we wish to calculate is one of those survivors, so $\tau^\iota>t$.
\end{enumerate}
Recall that $X_{t}$ denotes the value of the fund per survivor at time $t$ before consumption
and mortality.
\[
\E_{t_0}(Z_{t_0+\delta t}^\alpha \mid I_{i,{t_0+\delta t}}) = 
\E_{t_0}( v_i( X_{t_0+\delta t}, t_0 + \delta t)^\alpha \mid I_{i,t_0+\delta t}).
\]
Hence
\[
\E_{t_0}(Z_{t_0+\delta t}^\alpha) = \sum_{i=1}^{n_t} \frac{i}{n_t} S_t(n_t,i) \E_{t_0}(v_i( X_{t_0+\delta t}, t_0 + \delta t)^\alpha \mid I_{i,t_0+\delta t}).
\]
We now let ${\cal A}_{t+{\delta t}}(x,c^r_t)$ be the set of random variables
$X_{t+\delta t}$ representing the value of the fund per survivor at time $t+\delta t$
before consumption that can be obtained by a continuous time trading strategy given initial capital $\overline{X}_t = \frac{n_{t+\delta t}}{n_t} (1-c^r_t) X_t$ per survivor when $X_t=x$. Using the dynamic programming principle we find
\begin{small}
\begin{align}
&v_n(x,t) \nonumber \\
&= \sup_{\stackrel{c^r_t \geq 0}{X_{t+\delta t}\in {\cal A}_{t+\delta t}(x,c^r_t)}}
\left[
(x\, c^r_t)^\rho
+ \beta \left\{\sum_{i=1}^n  \frac{i}{n} S_t(n,i) \E_t( v_i(X_{t+\delta t}, t+\delta t)^\alpha \mid I_{i,t_0+\delta t} ) \right\}^\frac{\rho}{\alpha}
\right]^\frac{1}{\rho} \nonumber \\
&= \sup_{\stackrel{c^r_t \geq 0}{X_{t+\delta t}\in {\cal A}_{t+\delta t}(x,c^r_t)}}
\left[
(x\, c^r_t)^\rho
+ \beta \left\{ \left(\sum_{i=1}^n \frac{i}{n} S_t(n,i) z_{i,{t+\delta t}}^\alpha \E_t(  X_{t+\delta t}^\alpha \mid I_{i,t_0+\delta t} ) \right)  \right\}^\frac{\rho}{\alpha}
\right]^\frac{1}{\rho}. \nonumber \\
\end{align}
\end{small}
We used positive homogeneity to obtain the last line. The argument of Section \ref{sec:epsteinZinFormulae} tells us how to optimize over $X_t$.
Hence we find
\begin{equation*}
z_{n,t} =
\sup_{c^r_t \geq 0}
\left[
(c^r_t)^\rho
+ \beta \left(\sum_{i=1}^n \left( \frac{i}{n} \right)^{1-\alpha} S_t(n,i) z_{i,t+\delta t}^{\alpha} \right)^\frac{\rho}{\alpha} \left(\exp( \xi \, \delta t) (1-c^r_t)\right)^\rho  
\right]^\frac{1}{\rho}
\end{equation*}
where $\xi$ is as defined in equation \eqref{eq:defxi}. The optimal investment policy is also described in equation \eqref{eq:defastar}, and as before it depends only on the market and the monetary-risk-aversion parameter $\alpha$.

We may rewrite our expression for $z_{n,t}$ as follows:
\begin{equation}
z_{n,t} =
\sup_{c^r_t \geq 0}
\left[
(c^r_t)^\rho
+ \tilde{\theta}_{n,t}^\rho (1-c^r_t)^\rho  
\right]^\frac{1}{\rho}
\label{eq:valueFunction3Tilde}
\end{equation}
where $\tilde{\theta}$ is as given in equation \eqref{eq:deftildetheta}.

Equation \eqref{eq:valueFunction3Tilde} is structurally identical to equation \eqref{eq:valueFunction3}. Hence from equation
\eqref{eq:valueFunction4} we may deduce that equation \eqref{eq:valueFunction4Tilde}
holds as claimed.

\subsection{Distribution of consumption over time}

\begin{proof}[Proof of Theorem \ref{thm:epsteinZinConsumption}]
	We suppose as induction hypothesis that the
	distribution of $X_t$ is as described at time $t$.
	
	The budget equation \eqref{eq:budgetEquation}
	then tells us that the wealth per survivor after consumption, ${\overline{X}}_t$, satisfies
	\[
	\overline{X}_t = s^{-C}_t (1- z_t^{\frac{\rho}{\rho-1}}) X_t.
	\]
	Hence
	\[
	\log \overline{X}_t \sim
	N( \mu^X_t + \log(s^{-C}_t (1- z_t^{\frac{\rho}{\rho-1}})),
	(\sigma^X_t)^2 ).
	\]
	Our investment strategy from $t$ to $(t+\delta t)$ is a continuous time trading strategy where we hold a fixed proportion of our wealth in stocks at all time. So, in the interval $(t,t+\delta t]$, $X_t$ satisfies the SDE
	\[
	\ed X_t = (1-a^*) r X_t \, \ed t + a^* X_t (\mu \, \ed t + \sigma \, \ed W_t), \qquad X_t=\overline{X}_t.
	\]
	By It\^o's lemma we find
	\begin{align}
	\ed (\log X)_t &= (1-a^*) r \, \ed t + a^* ((\mu-\tfrac{1}{2}a^* \sigma^2) \, \ed t + \sigma \, \ed W_t), \qquad \log X_t = \log \overline{X}_t \nonumber \\
	&= \tilde{\xi} \ed t  + a^* \sigma \, \ed W_t.
	\end{align}
	We deduce that $(\log X_{t+\delta t}-\log \overline{X}_t)$ 
	conditioned on the value of $X_t$ will follow a normal distribution with
	mean $\tilde{\xi} \, \delta t$ and standard deviation $a^* \sigma \sqrt{\delta t}$.
	Moreover the random variable $(\log X_{t+\delta t}-\log \overline{X}_t)$
	is independent of $\overline{X}_t$. 
	
	The sum of independent normally distributed random increments yields a new
	normally distributed random variable, and one can compute the mean and variance by adding the mean and variance of the increments. Hence
	\[
	\log X_{t+\delta t}
	\sim N( \mu^X_{t+\delta t}, (\sigma^X_{t+\delta t})^2 )
	\]
	where
	\begin{equation}
	\mu^X_{t+\delta t} = 
	{\mu}^X_t
	+ \log(s^{-C}_t (1- z_t^{\frac{\rho}{\rho-1}}))
	+ \tilde{\xi} \, \delta t
	\label{eq:muXRecursion}
	\end{equation}
	and
	\begin{equation}
	(\sigma^X_{t+\delta t})^2 = (\sigma^X_t)^2 + (a^*)^2 \sigma^2 \delta t  .
	\label{eq:sigmaXRecursion}
	\end{equation}
	Solving the recursion \eqref{eq:sigmaXRecursion}
	yields equation \eqref{eq:sdLogWealth}.
	The result for $X_t$ now follows by induction.
	
	Equation \eqref{eq:distConsumption} follows from equation
	\eqref{eq:cStarConclusion}. Using \eqref{eq:distConsumption}, \eqref{eq:meanLogWealth}
	we calculate
	\begin{align*}
	\E(& \log c_{t+\delta t} \mid c_t) - \log( c_t )  \\
	&=
	\frac{\rho}{\rho-1} \left( \log(z_{t+\delta t})
	- \log(z_t) \right)
	+ \log( s_t^{-C}) + \log( 1 - z_t^\frac{\rho}{\rho-1}) + \tilde{\xi} \delta t
	\\
	&=
	\frac{\rho}{1-\rho}\left(
	\log(z_t) -\log(z_{t+\delta t})  \right)
	+ \log( s_t^{-C}) + \log\left( \frac{z_t^\frac{\rho}{1-\rho} - 1}{z_t^\frac{\rho}{1-\rho}} \right)
	+ \tilde{\xi} \delta t \\
	&= \log( s_t^{-C} ) +
	\log\left(  \phi_t^{\frac{\rho}{1-\rho}}\right) + \tilde{\xi} \delta t
	\qquad \text{ by equation \eqref{eq:valueFunction4}.}
	\end{align*}
	This completes the proof.
\end{proof}

\subsection{Proof of Theorem \ref{thm:cvgcEpsteinZin}}
\label{appendix:longevitypoolingconvergence}

Proving Theorem \ref{thm:cvgcEpsteinZin} requires
some preliminary material.
Our proof strategy will be to approximate an expectation
involving the binomial distribution with a Gaussian integral which
we can then estimate using Laplace's method.
To get a precise convergence result, we need some estimates
on the rate of convergence
in the Central Limit Theorem. The estimates given in \cite{bhattacharya}
suit our purposes well. For the reader's convenience we will summarize the result we will need.

We begin with some definitions.
A random variable $X$ is said to satisfy Cram\'er's condition if its characteristic
function $f_X$ satisfies
\begin{equation}
\sup \{ |f_X(t)| \mid t> \eta \} < 1
\label{eq:cramer}
\end{equation}
for all positive $\eta$.
Let $\Phi$ be the standard normal
distribution. Given a set $A \subseteq \R$, and a function $g$, $\omega_g(A)$ is defined to equal
\[
\omega_g(A):=\sup\{|g(x)-g(y)| \mid x, y \in A \}.
\]
The set $B_{\epsilon}(x)$ is the ball of radius $\epsilon$ around $x$.

Let $Q_n$ be the appropriately normalized $n$-th partial sum
of a sequence of independent identically
distributed random variables $X^{(i)}$ for which Cram\'er's condition
holds and which have finite moments have
all orders. Appropriately normalized
means normalized such that the central
limit theorem implies $Q_n$ converges
to $\Phi$ in distribution.
Then there exists a constant $c$ such that for any bounded measurable function $g$
\begin{equation}
|\int_\R g \, \ed(Q_n - \Phi)| \leq c\, \omega_g(\R) n^{-\frac{1}{2}}
+ \int \omega_g(B_{c n^{-\frac{1}{2}}} (x)) \ed \Phi(x).
\label{eq:bhattacharya}
\end{equation}

The full result given in \cite{bhattacharya}
is more general and more precise than we need.
Let us explain how the statements are related.
We have simplified our statement to the
one-dimensional case, we have assumed the $X^{(i)}$
are identically distributed and we have assumed all 
moments of $X^{(1)}$ exist. The statement
in \cite{bhattacharya} is therefore more
complex, and in particular involves
additional terms called $\rho_{s,n}$
defined in the one-dimensional case by
\[
\rho_{s,n}:=\frac{1}{n}(\sum_{i=1}^n E|\sigma_n X^{(i)}|^s)
\]
where
\[
\sigma_n^2 := n \left( \sum_{i=1}^n \Var( X^{(i)} ) \right)^{-1}.
\]
Our assumptions on $X$ guarantee that $\rho_{s,n}$ is independent of $n$ and so
we have been able to absorb these terms into
the constant $c$. In addition, our statement
uses Theorem 1 of \cite{bhattacharya},
together with remarks at the end of the second paragraph on page 242 about
Cram\'er's condition.

\bigskip

We are now ready to prove the desired convergence result.

\begin{proof}[Proof of Theorem \ref{thm:cvgcEpsteinZin}]
	We proceed by a backward induction on $t$. The result is trivial for the case $t=T$. We now assume the induction hypothesis
	\[
	z_{n,t+\delta t} = z_{\infty,t+\delta t} + O(n^{-\frac{1}{2}}).	
	\]
	We wish to compute $\tilde{\theta}_{n,t}$, but only the sum
	in the expression \eqref{eq:deftildetheta} is difficult to compute. We
	will call this sum $\lambda_{n,t}$, so
	\begin{equation}
	\lambda_{n,t}:=\sum_{i=1}^n \left( \frac{i}{n} \right)^{1-\alpha} S_t(n,i) z^{\alpha}_{i, t+\delta t}.
	\label{eq:defnlambdant}
	\end{equation}
	
	Heuristically, one can approximate with a Gaussian integral using the Central Limit Theorem and then apply Laplace's method to compute the limit as $n \to \infty$. This motivates the idea of decomposing the sum above into a ``left tail'' for small values of $i$, a central term for values of $i$ near the mean of the Binomial distribution $n p$, and a ``right tail'' for larger values of $i$. We will in fact bound the tails separately (Steps 1 and 2, below),
	and then we will be able to rigorously apply a Central Limit Theorem
	argument to the central term (Step 3).
	
	We compute
	\begin{equation}
	\frac{S_t(n,i-1)}{S_t(n,i)} = \frac{i s_t}{(1 - i + n) (1 - s_t)}.
	\label{eq:exponentialDecay}
	\end{equation}
	We note that
	\[
	i \leq \frac{(n+1) (1-s_t)}{(\lambda -1) s_t+1} \implies \frac{i s_t}{(1 - i + n) (1 - s_t)} \leq \frac{1}{\lambda} \implies \frac{S_t(i-1,n)}{S_t(i,n)} \leq \frac{1}{\lambda}.
	\]
	So we define an integer $N_{\lambda, t, n}$ by
	\[
	N_{\lambda, t, n} := \left\lfloor \frac{(n+1) (1-s_t)}{(\lambda -1) s_t+1} \right\rfloor
	\]
	and then equation \eqref{eq:exponentialDecay} will ensure that we
	have exponential decay of $S_t(i,n)$ as $i$ decreases
	\begin{equation}
	i\leq N_{\lambda, t, n} \implies S(n,i) \leq \lambda^{i-N_{\lambda,t, n}}
	S(n,N_{\lambda, t, n}).
	\label{eq:exponentialdecay}
	\end{equation}
	{\bf Step 1.} We can now estimate the left tail of \eqref{eq:defnlambdant}. There
	exists a constants $C_{1,t}$, $C_{2,t}$ such that
	\begin{align}
	\sum_{i=1}^{N_{3,t,n}}  \left( \frac{i}{n} \right)^{1-\alpha} S_t(n,i)
	&\leq C_{1,t} \int_1^{N_{3,t,n}} 3^{-N_{3,t,n}+i} S(n,N_{3, t, n}) \left(\left(\frac{1}{n}\right)^{1-\alpha}+1 \right)
	\, \ed i \nonumber \\
	&\leq C_{2,t} S(n,N_{3, t, n}) \left(\left(\frac{1}{n}\right)^{1-\alpha}+1 \right).
	\label{eq:lefttail1}
	\end{align}
	To estimate this term, we observe that
	\begin{align*}
	N_{2,t,n}-N_{3,t,n} &= \left \lfloor 
	\frac{(n+1) (1-s_t)}{s_t+1}
	\right \rfloor
	-
	\left \lfloor \frac{(n+1) (1-s_t)}{2 s_t+1}
	\right \rfloor \\
	&\geq \left \lfloor 
	\frac{(n+1) (1-s_t)}{s_t+1}
	-
	\frac{(n+1) (1-s_t)}{2 s_t+1}
	\right \rfloor -2 \\
	&= \left \lfloor 
	\frac{(n+1) s_t (1-s_t)}{(s_t+1)(2 s_t + 1)}
	\right \rfloor -2
	\end{align*}
	Hence by equations \eqref{eq:exponentialDecay} and \eqref{eq:lefttail1} we
	find
	\begin{align}
	\sum_{i=1}^{N_{3,t,n}}  \left( \frac{i}{n} \right)^{1-\alpha} S_t(n,i)
	&\leq C_{2,t} S(N_{2, t, n},n) 2^{-\left \lfloor 
		\frac{(n+1) s_t (1-s_t)}{(s_t+1)(2 s_t + 1)}
		\right \rfloor +2} \left(\left(\frac{1}{n}\right)^{1-\alpha}+1 \right) \nonumber \\
	&\leq C_{2,t} 2^{-\left \lfloor 
		\frac{(n+1) s_t (1-s_t)}{(s_t+1)(2 s_t + 1)}
		\right \rfloor +2} \left(\left(\frac{1}{n}\right)^{1-\alpha}+1 \right). \nonumber
	\end{align}
	which decays exponentially as $n\to \infty$. Our induction hypothesis
	ensures that the $z_{i,t+\delta t}^\alpha$ are bounded, so we may safely
	conclude that
	\begin{equation}
	\lambda_{n,t}:=\left( \sum_{i=N_{3,t,n}}^n \left( \frac{i}{n} \right)^{1-\alpha} S_t(n,i) z^{\alpha}_{i, t+\delta t} \right)
	+ O(n^{-\frac{1}{2}})
	\end{equation}
	{\bf Step 2.} We apply the same strategy to the right tail. This time we compute
	\[
	\frac{S_t(n,i+1)}{S_t(n,i)}=\frac{(1-s_t) (n-i)}{(i+1) s_t}
	\]
	we define
	\[
	M_{\lambda,t,n} = \left\lceil \frac{\lambda  n (1-s_t)-s_t}{\lambda (1-s_t)+s_t} \right\rceil
	\]
	to ensure that
	\[
	i \geq M_{\lambda, t, n} \implies  \frac{S_t(n,i+1)}{S_t(n,i)} \leq \frac{1}{\lambda}.
	\]
	Repeating the same argument as for the left tail tells us that
	\begin{equation}
	\lambda_{n,t}:=\left( \sum_{i=N_{3,t,n}}^{M_{3,t,n}} \left( \frac{i}{n} \right)^{1-\alpha} S_t(n,i) z^{\alpha}_{i, t+\delta t} \right)
	+ O(n^{-\frac{1}{2}})
	\label{eq:lambdamiddle}
	\end{equation}
	We note that $\left(\frac{i}{n}\right)^{1-\alpha}$ is monotonic in
	$i$ and that  $\frac{N_{3,t,n}}{n}$ and $\frac{N_{3,t,n}}{n}$ tend
	to finite, non-zero limits as $n\to \infty$. We deduce that
	there exists a constant $C_{3,t}$ such that
	\begin{equation}
	N_{3,t,n}\leq i \leq M_{3,t,n} \implies \left|\left( \frac{i}{n} \right)^{1-\alpha} \right| \leq C_{3,t}
	\label{eq:summandBound}.
	\end{equation}
	This implies that
	\[
	\sum_{i=N_{3,t,n}}^{M_{3,t,n}} \left( \frac{i}{n} \right)^{1-\alpha} S_t(n,i)
	\leq C_{3,t} \sum_{i=N_{3,t,n}}^{M_{3,t,n}} S_t(n,i) \leq C_{3,t}
	\]
	Using  this together with our induction hypothesis,
	we may obtain from \eqref{eq:lambdamiddle} that
	\begin{equation}
	\lambda_{n,t}:=\left( \sum_{i=N_{3,t,n}}^{M_{3,t,n}} \left( \frac{i}{n} \right)^{1-\alpha} S_t(n,i) z^{\alpha}_{\infty, t+\delta t} \right)
	+ O(n^{-\frac{1}{2}}).
	\label{eq:lambdazinfinity}
	\end{equation}	
	{\bf Step 3.} In order to apply the bound \eqref{eq:bhattacharya},
	we define a Bernoulli random variable $X_{i,t}$ which takes the value $1$ if
	the $i$-th individual survives at time $t$ and $0$ otherwise. Thus
	$S_t(n,i)$ is the probability that $\sum_{j=1}^n X_{j,n}=i$.
	We define scaled random variables $\tilde{X}_{j,t}$ 
	of mean $0$ and standard deviation $1$ by
	\[
	\tilde{X}_{j,t} = \frac{X_j - s_t}{\sqrt{s_t(1-s_t)}},
	\]
	and so the appropriately scaled partial
	sum $Q_n$ is given by
	\[
	Q_n = \frac{1}{\sqrt{n}} \sum_{i=1}^n \tilde{X}_n = \frac{1}{\sqrt n}
	\sum_{j=1}^n \frac{X_j - s_t}{\sqrt{s_t(1-s_t)}}
	=
	\frac{ \left(\sum_{i=1}^n X_i\right) - n s_t}{\sqrt{n s_t(1-s_t)}}.
	\]
	We now wish to rewrite \eqref{eq:lambdazinfinity} as an integral.
	\[
	\lambda_{n,t} = 
	\int
	\id_{[N_{3,t,n},M_{3,t,n}]}(i) \left( \frac{i}{n} \right)^{1-\alpha} z^{\alpha}_{\infty, t+\delta t} 
	\ed (\sum_{j=1}^n X_{j,t})(i)
	+ O(n^{-\frac{1}{2}}).
	\]
	
	We make the substitution $i=n s_t + x \sqrt{n s_t(1-s_t)}$ to find
	\begin{multline}
	\lambda_{n,t} = 
	\int \Big[
	\id_{[N_{3,t,n},M_{3,t,n}]}(n s_t + x \sqrt{n s_t(1-s_t)}) \\
	\times \left( \frac{n s_t + x \sqrt{n s_t(1-s_t)}}{n} \right)^{1-\alpha} z^{\alpha}_{\infty, t+\delta t} \Big]
	\, \ed Q_n(x)
	+ O(n^{-\frac{1}{2}}).
	\end{multline}
	We may rewrite this as
	\begin{equation}
	\lambda_{n,t} = 
	\int \id_{[{\ell_{n,t}},{u_{n,t}}]}(x)
	\left( \frac{n s_t + x \sqrt{n s_t(1-s_t)}}{n} \right)^{1-\alpha} z^{\alpha}_{\infty, t+\delta t}
	\, \ed Q_n(x)
	+ O(n^{-\frac{1}{2}})
	\label{eq:lambdaasintegral}
	\end{equation}
	where
	\[
	\ell_{n,t}:= (N_{3,t,n} - n s)/\sqrt{n s(1-s)}, \quad
	u_{n,t}:= (M_{3,t,n} - n s)/\sqrt{n s(1-s)}.
	\]
	From our expressions for $N_{3,t,n}$ and $M_{3,t,n}$ one
	readily sees that $\ell_{n,t}$ tends to $-\infty$ at a rate proportional to $O(-\sqrt{n})$
	as $n\to\infty$. Likewise $u_{n,t}$ tends to $+\infty$ at a rate $O(\sqrt{n})$ as $n\to\infty$. We will assume that $n$ is large enough to ensure that $\ell_{n,t} <0 <u_{n,t}$.
	
	Let us define $g$ by
	\begin{equation}
	g = \id_{[{\ell_{n,t}},{u_{n,t}}]}(x)
	\left( \frac{n s_t + x \sqrt{n s_t(1-s_t)}}{n} \right)^{1-\alpha}.
	\end{equation}
	By \eqref{eq:summandBound}, $g$ is bounded by a constant independent of $n$.
	Hence $\omega_g(\R)$ is bounded independent of $n$. We can bound the derivative of $g$ inside the interval $({\ell_{n,t}},{u_{n,t}})$, independent of $n$. Hence for
	any $x \in (\frac{1}{2}{\ell_{n,t}},\frac{1}{2}{u_{n,t}})$ and
	for sufficiently large $n$, $\omega_g(B_{c n^{-\frac{1}{2}}}(x))< C_{4,t} n^{-\frac{1}{2}}$
	for some constant $C_{4,t}$ independent of $n$. It follows that
	\begin{equation}
	\int \id_{[\frac{1}{2}{\ell_{n,t}},\frac{1}{2}{u_{n,t}}]}(x) \omega_g(B_{c n^{-\frac{1}{2}}}(x)) \, \ed \Phi(x) = O(n^{-\frac{1}{2}}).
	\label{eq:oscbound1}
	\end{equation}
	Since $\ell_{n,t}$ tends to $-\infty$ at a rate proportional to $O(-\sqrt{n})$, since $g$ is bounded, and since the normal distribution
	has super-exponential decay in the tails, we have
	\begin{equation}
	\int \id_{(-\infty,\ell_{n,t}]} \omega_g(B_{c n^{-\frac{1}{2}}}(x)) \, \ed \Phi(x) = O(n^{-\frac{1}{2}})
	\label{eq:oscbound2}
	\end{equation}
	and similarly
	\begin{equation}
	\int \id_{[u_{n,t},\infty)} \omega_g(B_{c n^{-\frac{1}{2}}}(x)) \, \ed \Phi(x) = O(n^{-\frac{1}{2}}).
	\label{eq:oscbound3}
	\end{equation}
	Estimates \eqref{eq:oscbound1}, \eqref{eq:oscbound2}, \eqref{eq:oscbound3}
	together with the bound on $\omega_g(\R)$ allow us to apply the Central Limit Theorem estimate \eqref{eq:bhattacharya} to \eqref{eq:lambdaasintegral}. We note that Cram\'er's condition holds. The result is
	\begin{equation*}
	\lambda_{n,t} = 
	\int \id_{[{\ell_{n,t}},{u_{n,t}}]}(x)
	\left( \frac{n s_t + x \sqrt{n s_t(1-s_t)}}{n} \right)^{1-\alpha} z^{\alpha}_{\infty, t+\delta t}
	\, \ed \Phi(x)
	+ O(n^{-\frac{1}{2}})
	\end{equation*}
	
	We now apply Laplace's method to estimate this Gaussian
	integral (see Proposition 2.1, page 323 in \cite{ss03} ) and obtain
	\begin{equation*}
	\lambda_{n,t} = 
	s_t^{1-\alpha}
	+ O(n^{-\frac{1}{2}})
	\label{eq:lambdaasgaussian}
	\end{equation*}
	From the definition of $\tilde{\theta}$ in equation \eqref{eq:deftildetheta} and our definition of $\lambda_{n,t}$
	in \eqref{eq:defnlambdant} we obtain
	\begin{equation*}
	\tilde{\theta}_{n,t} =  \beta^\frac{1}{\rho} \exp( \xi \, \delta t) 
	s_t^\frac{{1-\alpha}}{\alpha} z_{\infty,t+\delta t} + O(n^{-\frac{1}{2}}).
	\end{equation*}
	We may now compare this with the definition of $\theta_t$ given in \eqref{eq:defphi} for the infinitely collectivised case $C=1$.
	We see that in this case
	\begin{equation*}
	\tilde{\theta}_{n,t} = \theta_t + O(n^{-\frac{1}{2}}).
	\end{equation*}
	It now follows from the recursion relations for $z_{n,t}$
	and $z_{\infty,t}$ (given in \eqref{eq:valueFunction4}
	and \eqref{eq:valueFunction4Tilde} respectively) together with our induction hypothesis
	that 
	\begin{equation*}
	z_{n,t} = z_{\infty,t} + O(n^{-\frac{1}{2}}).
	\end{equation*}
	
	This completes the induction step and the proof.
\end{proof}

\section{Systematic Longevity Risk}

\subsection{Justification for continuous-time aggregator}
\label{sec:ezAggregatorMotivation}

If an individual's time of death is independent of the systematic factors, their discrete-time Epstein-Zin utility with mortality
satisfies
\[
Z_{t}^\rho=c_t^\rho + e^{-(\delta + \frac{\rho}{\alpha} \lambda_t) \delta t} \E_{\P}( Z_{t+\delta t}^\alpha  \mid {\cal F}_t)^{\frac{\rho}{\alpha}}. 
\]
where we have introduced a discounting rate $\delta$ so that $\beta=e^{-\delta \, \delta t}$.
Rearranging we find
\[
\E_{\P}( Z_{t+\delta t}^\alpha  \mid {\cal F}_t) = \left[
\frac{Z_{t}^\rho- c_t^\rho}
{e^{-(\delta + \frac{\rho}{\alpha} \lambda_t) \delta t}}
\right]^\frac{\alpha}{\rho}. 
\]
We define $V$ by requiring $\alpha V=Z^\alpha$. The coefficient $\alpha$ is there to ensure that the transform $Z \to V$ is monotone so that $V$ is a gain
function that defines identical preferences to $Z$. So
\[
\E_\P( V_{t+\delta t} \mid {\cal F}_t )
= \frac{1}{\alpha}
\left[
\frac{
(\alpha V)^{\frac{\rho}{\alpha}} - \delta t \, c^\rho
}
{e^{-(\delta + \lambda \frac{\rho}{\alpha})\Delta t }}
\right]^{\frac{\alpha}{\rho}}
\]
Proceeding formally, we use l'H\^opital's rule to find:
\begin{align}
\frac{\ed}{\ed s} \E_\P( V_{s} \mid {\cal F}_t )
\Big|_{s=t}
&=\lim_{\delta t \to 0} \frac{\ed}{\ed (\delta t) }
\left[
\frac{1}{\alpha}
\left[
\frac{
(\alpha V)^{\frac{\rho}{\alpha}} - \delta t \, c^\rho
}
{e^{-(\delta + \lambda_t \frac{\rho}{\alpha})\delta t }}
\right]^{\frac{\alpha}{\rho}}
\right] \nonumber \\
&=\frac{1}{\alpha} \frac{\alpha}{\rho} (- c^\rho)((\alpha V)^\frac{\rho}{\alpha} )^{\frac{\alpha}{\rho}-1}
+ \left( \frac{\alpha}{\rho} \delta + \lambda_t \right) V
\nonumber \\
&=-\frac{1}{\rho}c^\rho \left( \alpha V \right)^{1-\frac{\rho}{\alpha}} + \left( \frac{\alpha}{\rho} \delta + \lambda_t \right) V.
\label{eq:ezAggregatorMotivation}
\end{align}
This then motivates the definition for the Epstein--Zin aggregator with mortality as a solution to the BSDE \eqref{eq:ezBSDE}
will satisfy equation \eqref{eq:ezAggregatorMotivation}.

\subsection{Computation of the HJB equation}

We assume the value function $V(t,w,\lambda,S)$ is smooth
and apply It\^o's lemma to obtain:
\[
\begin{split}
\ed V &=
\Bigg[
\frac{\partial V}{\partial t} + \frac{ \partial^2 V }
{\partial w \partial S} q(\sigma S)^2
+ \frac{1}{2} \frac{\partial^2 V}{\partial w^2}(q\sigma S)^2 + \frac{1}{2} \frac{\partial^2 V}{\partial S^2} (\sigma S)^2 +  \drift_\lambda\frac{\partial V}{\partial \lambda}\\
&\quad  + \frac{\partial V}{\partial w} \left(\lambda  w k + r(w-qS)-c+q\mu S\right)
+ \frac{\partial V}{\partial S} \mu S + \frac{1}{2} \vol_\lambda^2 \frac{\partial^2 V}{\partial \lambda^2}
\Bigg] \, \ed t \\
&\quad
+ q\sigma S \ed W^1_t + \vol_\lambda\frac{\partial V}{\partial \lambda}d W^2_t .
\end{split}
\]
The dynamic programming principle tells us $h:=V +\int_0^tf(c,V,\lambda)\ed s $ is a martingale for the optimal strategy and a supermartingale otherwise. Hence, we require the drift of $h$ to equal zero for the optimal strategy $(c^*,q^*)$ and be less than zero otherwise. Thus, our HJB equation is 

\begin{align}
    0=\underset{(c,q)\in\mathcal{A}(w_0)}{\sup}&\left[\frac{\partial V}{\partial t} + \frac{ \partial^2 V }
    {\partial w \partial S} q(\sigma S)^2
    + \frac{1}{2} \frac{\partial^2 V}{\partial w^2}(q\sigma S)^2 + \frac{1}{2} \frac{\partial^2 V}{\partial S^2} (\sigma S)^2  \right.\nonumber\\
    &\left.+  \drift_\lambda\frac{\partial V}{\partial \lambda}
    + \frac{\partial V}{\partial w} \left(\lambda w k + r(w-qS)-c+q\mu S\right)\right.
    \nonumber\\
    &\left.+ \frac{\partial V}{\partial S} \mu S + \frac{1}{2} \vol_\lambda^2 \frac{\partial^2 V}{\partial \lambda^2}+ f(c_t,V_t,\lambda_t)\right].\label{eq:general_HJB_1_fund}
\end{align}
Differentiating with respect to $c$ and $q$, we obtain the following optimal consumption and investment strategies
\begin{align}
    & c^*= \left((\alpha V)^{\frac{\rho}{\alpha}-1}\frac{\partial V}{\partial w}\right)^{\frac{1}{\rho-1}} \label{eq:op_consumption}\\
    & q^*=\frac{(r-\mu) \frac{\partial V}{\partial w} - S \sigma^2 \frac{\partial^2 V}{\partial w\partial S}}{S \sigma^2 \frac{\partial^2 V}{\partial w^2}}\label{eq:op_quantity}.
\end{align}
Substituting the ansatz \eqref{eq:ansatz}, \eqref{eq:op_consumption} and \eqref{eq:op_quantity} into \eqref{eq:general_HJB_1_fund}, yields the equation \eqref{eq:g_eqtn}.

\subsection{The supremum of the value function for ill-posed problems}

\label{sec:complexSolutions}

We now explain why our non-trivial analytical solution is complex for certain values of $\alpha$ and $\rho$, and what the value function must be in these cases, at least for $a=b=0$. First, we return to our most general HJB equation \eqref{eq:general_HJB_1_fund}. Here, we now take $r=\mu=0$ (which implies the quantity of the risky asset purchased is $q=0$), $a=b=0$ and let the consumption rate $c=\Tilde{c}w$ be proportional to wealth (as we should still have a symmetry of the HJB equation with respect to wealth), but otherwise unspecified, so that the supremum is not necessarily obtained. As for our ansatz, we let 
\begin{equation}
    V(\tilde{c},\lambda)=w^\alpha g(\Tilde{c},\lambda).
\end{equation}
The resulting HJB equation is the following algebraic expression
\begin{equation}
    g(\tilde{c}, \lambda ) \left((k\alpha -1) \lambda +\frac{\alpha  c^{\rho } (\alpha  g(\tilde{c}, \lambda ))^{-\frac{\rho }{\alpha }}}{\rho }-\alpha  c\right)=0,\label{eq:c_hjb_eqtn}
\end{equation}
which has the trivial solution $g(\tilde{c},\lambda)=0$ and
\begin{equation}
    g(\tilde{c},\lambda)=\frac{c^{\alpha } \left(\frac{\rho  (\lambda(1-k\alpha) +\alpha  \tilde{c} )}{\alpha }\right)^{-\frac{\alpha }{\rho }}}{\alpha }. \label{eq:g_c_sol}
\end{equation}
When $k=0$ and the longevity credit system is removed, this solution is always real. We now focus on the case $k=1$ with longevity credits, which can produce complex solutions. In order to get a real non-trivial value function, we must require
\begin{equation}
    \tilde{c}\rho\geq \frac{\lambda(\alpha-1)\rho}{\alpha}:=\tilde{c}_c.\label{eq:c_condition}
\end{equation}
Checking this condition for all four sign combinations of $\alpha$ and $\rho$, we learn the following:
\begin{enumerate}
    \item In the top right quadrant, when $\alpha,\rho>0$, $c\geq c_c<0$, so our consumption is unbounded. As $c\to0^+$, we see $g\to0^+$ and hence $V\to0^+$ (at least for finite wealth). As $c\to\infty$ we see $g\to0^+$ and hence $V\to0^+$ (at least for finite wealth). A maximum can therefore be obtained and our analytic solution is meaningful.  
    
    \item In the top left quadrant, when $\alpha<0,\rho>0$, $c\geq c_c>0$, so our consumption is bounded from below. As $c\to\infty$, we see $g\to-\infty$ and hence $V\to-\infty$ (at least for finite wealth). As $c\to\frac{\lambda(\alpha-1)}{\alpha}^+$ we see $g\to0^+$ and hence $V\to0^+$ (at least for finite wealth). The best we can do is obtain a value function of $0$, which is possible since $g(\tilde{c},\lambda)=0$ is a solution to \eqref{eq:c_hjb_eqtn} in this case. 
    
    \item In the bottom left quadrant, when $\alpha<0,\rho<0$, $c\leq c_c>0$, so our consumption is bounded from above. As $c\to0^+$, we see $g\to-\infty$ and hence $V\to-\infty$ (at least for finite wealth). As $c\to\frac{\lambda(\alpha-1)}{\alpha}^-$ we see $g\to-\infty$ and hence $V\to-\infty$ (at least for finite wealth). A maximum can therefore be obtained and our analytic solution is meaningful. 
    
    \item In the bottom right quadrant, when $\alpha>0,\rho<0$, $c\leq c_c<0$, so there are no valid consumption strategies that yield a solution the form \eqref{eq:g_c_sol}. Hence, we only have the trivial solution $g(\tilde{c},\lambda)=0$, which satisfies \eqref{eq:c_hjb_eqtn} in this case.
\end{enumerate}

When $a$ and $b$ are non-zero, this picture is similar, but somewhat perturbed. For the specific case $a=4,b=1$, we find non-trivial regions in the $(\alpha,\rho)$-plane where our analytical solution is complex (see \Cref{fig:costOfSystematicMortalityRisk})

\subsection{Continuous time CBD mortality model details}\label{sec:CBD_details}

We consider a continuous time version of the two-factor Cairns-Blake-Dowd model, for which the underlying mortality effects are described by:
\begin{align}
    dA := d(A_1,A_2)= \mu dt + C dW_t,
\end{align}
where 
\begin{align*}
    \mu&:=(\mu_1,\mu_2) = (-0.00669, 0.000590) \text{ and }\\ 
    V&=CC^T=\begin{pmatrix}
            0.00611 & -0.0000939\\ 
            -0.0000939 & 0.000001509
            \end{pmatrix}.
\end{align*}
The parameter values used come from \cite{CBD_article} equation (4).
C is the upper triangular matrix from the Cholesky decomposition of V i.e., 
\begin{equation}
    C=
    \begin{pmatrix}
            0.0782 & -0.00120\\ 
            0 & 0.000257            
            \end{pmatrix}.\label{eq:vol_matrix}
\end{equation}
The process $A_1$ captures general improvements in mortality over time at all ages (and therefore trends downwards with time), while $A_2$ captures the fact mortality  improvements have been greater at lower ages (and therefore trends upwards with time) \cite{CBD_article}. 

We next define $p(x_0,t)$ to be the survival probability at time $t$, for a cohort of age $x_0$ at time $t=0$, and $$q(x,t)=1-p(x,t)=\frac{e^{A_1 +A_2 (x_0+t)}}{1+e^{A_1 +A_2 (x_0+t)}}$$ to be the probability of death at time $t$. We assume $q(x_0,t)=1-e^{-\lambda(x_0,t)}$, with $\lambda(x_0,t)$ essentially being a time dependent rate parameter for an exponential distribution. We therefore interpret $\lambda(x_0,t)$ as our mortality rate and have 
\begin{equation}  
\lambda(x_0,t) = -\text{log}\left(1-\frac{e^{A_1 +A_2 (x_0+t)}}{1+e^{A_1 +A_2 (x_0+t)}}\right).\label{eq:rate_def}
\end{equation}

We would next like to derive the SDE for $\lambda$, which is done by an application of It\^{o}'s Lemma. In its current form, we obtain an SDE with a stochastic drift due to the $A_2 (x_0+t)$ term. This increases the dimension of the HJB equation from 2 to 3. For simplicity, we make the following modification. Note that the SDE for $A_1+A_2(x_0+t)$, is given by
\begin{align}
    d(A_1&+A_2(x_0+t)) \nonumber\\
    &= dA_1 + dA_2 (x_0+t) + A_2 dt\nonumber\\
    &= (\mu_1 +\mu_2(x_0+t) +A_2)dt+C_{1,1}dW^1_t+C_{1,2}dW^2_t+(x_0+t)C_{2,2}dW^2_t\nonumber\\
    &=(\mu_1 +\mu_2(x_0+t) +A_2)dt+\left(C_{1,1}^2+C_{1,2}^2+((x_0+t)C_{2,2})^2\right)^\frac{1}{2} d\Tilde{W}_t,\label{eq:underlying_SDE}
\end{align}
where $C_{i,j}$ denotes the $i,j$'th component of \eqref{eq:vol_matrix} and $\Tilde{W}$ is a new independent Brownian motion. To obtain a tractable problem, we define a new variable $x:=A_1+A_2(x_0+t)$ in \eqref{eq:rate_def}, which satisfies the SDE
\begin{equation}
    dx = (\mu_1 +\mu_2(x_0+t) +A_{2,0}+\mu_2 t)dt+\left(C_{1,1}^2+C_{1,2}^2+((x_0+t)C_{2,2})^2\right)^\frac{1}{2} d\Tilde{W}_t,\label{eq:simple}
\end{equation}
i.e., \eqref{eq:underlying_SDE} with $A_2$ replaced by the solution of the ODE $dA_{2}=\mu_2 dt$. This then yields a HJB equation dependent on $t$ and $\lambda$ only. Looking at Figure 1 in \cite{CBD_article}, we choose $A_2^0=0.1058$. 

The impact of this simplification (i.e. the difference between simulating \eqref{eq:underlying_SDE} and \eqref{eq:simple}) is a lower mortality rate at older ages. Looking at \Cref{fig:1factor_vs_2factor}, there is little difference in the mortality rates up to age 80. The difference only becomes notable from age 90 onwards, by which point the prediction of either model becomes less meaningful. 

\begin{figure}
    \centering
        \includegraphics[width=0.8\textwidth]{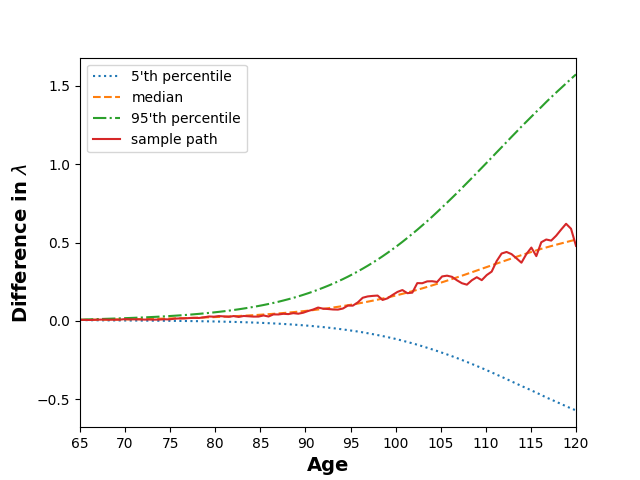}
        \caption{Fan diagram of the difference in simulated mortality rates from \eqref{eq:underlying_SDE} $\lambda_1$ and \eqref{eq:simple} $\lambda_2$ i.e, $\lambda_1-\lambda_2$. Here, we consider $10^6$ scenarios and take 100 equally spaced time steps.}
    \label{fig:1factor_vs_2factor}
\end{figure}

\subsection{Numerical approach for solving the CBD problem }\label{sec:numerical_approach}

We solve the HJB equation for the CBD model using the Crank-Nicolson scheme, along with an improved Euler step to generate the first estimate of the solution at each time step.
We set $x_0=65$ to be the starting age of our simulations. $\lambda=0.01$ corresponds to a survival probability of 0.99 at retirement age, while $\lambda=10$ yields a survival probability of $4.5\times 10^{-5}$, so that all individuals are almost surely dead at this point. We are therefore interested in the solution for $\lambda\in[0.01,10]$.

The value of the solution at the boundaries is not obvious a priori. As such, we enforce Neumann boundary conditions at the ends of our domain and extend our computational domain to $\lambda\in[0.001,20]$ to minimise the impact of the boundary conditions on the solution for $\lambda\in[0.01,10]$, the points of interest. We expect $g\to\infty$ as $\lambda\to0$ and $g\to0$ as $\lambda\to\infty$. Because of this expected behaviour, we take advantage of two transformations: (i) we use the change of variables $L=\textrm{log}(\lambda)$ and (ii) solve the HJB equation for $\textrm{log}(g)$. We first solve the riskless equation as it only requires one boundary condition plus the payoff. Using the stylised model as a guide, we expect $g\approx \lambda^{\frac{\alpha (\rho-1)}{\rho}}$ to be a rough approximation to the solution. Taking into account our transformations, we therefore enforce $$\frac{\partial}{\partial L}  (\textrm{log} (g))= \frac{\alpha(\rho-1)}{\rho}.$$ With the risk-less solution $g_{det}$ obtained, we solve the full HJB equation with $$\frac{\partial}{\partial L}  (\textrm{log} (g(20)))=\frac{\partial}{\partial L}  (\textrm{log} (g_{det}(20)))$$ and $$\frac{\partial}{\partial L}  (\textrm{log} (g(0.001)))=\frac{\partial}{\partial L}  (\textrm{log} (g_{det}(0.001))),$$ where the derivatives are approximated using backward and forward difference approximations respectively.

Again, informed by \eqref{eq:analytical} from the stylised model, we take our payoff at the final time $T_f$, to be $$g_{T_f}=\lambda^{\frac{\alpha (\rho-1)}{\rho}},$$
which should be a rough approximation to the correct payoff. Since this is an approximation, we would like to minimise its impact on the solution. We therefore run our numerical scheme back from time $T_f=150$ (which corresponds to age 215). Our computational domain for $\lambda$ extends to $\lambda=20$, and in a deterministic setting (i.e., the volatility is zero in \eqref{eq:CBD_lambda}) the simulated value of $\lambda$ is approximately 20 at age 215, hence the choice. This time is far longer than is required for all individuals to die (our simulations suggest everyone will be dead by age 120 (see \Cref{fig:CBD}) (b)) and therefore achieves our aim of minimising the payoff's impact

We would also like to generate typical wealth and consumption streams for a fund that behaves optimally. This can be done as follows. Having obtained the numerical solution $g_{num}(\lambda)$ to \eqref{eq:g_eqtn}, it is not difficult to find constants $A,B$ such that $g_{num}\approx A \lambda^B$. This can then be used to approximate $c^*$ in \eqref{eq:op_consumption}. Along with 
$q^*$ given by \eqref{eq:op_quantity}, this can be feed into \eqref{eq:wealth_eq}. We can then simulate this approximation to \eqref{eq:wealth_eq} using the Euler-Maruyama method \cite{intro_book}.


\end{document}